\algnewcommand\algorithmiclet{\textbf{Let}}
\algnewcommand\Let{\item[\algorithmiclet]}
\algnewcommand{\LineComment}[1]{\State \(\triangleright\) #1}
\pgfplotsset{compat = newest}
\newcommand{\C}{\mathbb{C}}
\newcommand{\E}{\mathbb{E}}
\newcommand{\normalLap}{\tilde{\Delta}_k}
\newcommand{\ibmOneName}{\textbf{QBNE-Chebyshev}}
\newcommand{\apersOneName}{\textbf{CBNE-Power}}
\newcommand{\apersTwoName}{\textbf{CBNE-Chebyshev}}
\newcommand{\thisPaperName}{\textbf{QBNE-Power}}
\newcommand{\sampleaxis}{\hspace*{-0.22cm}\resizebox{0.18\textwidth}{!}{\begin{tikzpicture}[every node/.style={font={\tiny}}]
  \draw[-] (-1.0,0)--(1.0,0);
  \foreach \i in {-1,-0.5,...,1.1} \draw (\i,0)--(\i,.1);
  \node[below] at (-0.92,0) {$10^2$};
  \node[below] at (-0.5,0) {$10^3$};
  \node[below] at (0.0,0) {$10^4$};
  \node[below] at (0.5,0) {$10^5$};
  \node[below] at (0.92,0) {$10^6$};
  \node[below] at (0,-0.3) {Sample};
\end{tikzpicture}}}
\newtheorem{theorem}{Theorem}[section]
\newtheorem{lemma}[theorem]{Lemma}
\theoremstyle{definition}
\newtheorem*{definition}{Definition}
\newtheorem*{remark}{Remark}
\title{Comparing quantum and classical Monte Carlo algorithms for estimating Betti numbers of clique complexes}
\author[1]{Ismail Yunus Akhalwaya\textsuperscript{*}}
\author[1]{Ahmed Bhayat}
\author[1]{Adam Connolly\textsuperscript{*}}
\author[1]{Steven Herbert}
\author[2]{Lior Horesh}
\author[3]{Julien Sorci\textsuperscript{*}}
\email{julien.sorci@quantinuum.com}
\author[2]{Shashanka Ubaru}
\affil[1]{\textit{Quantinuum, Terrington House,
13-15 Hills Road, Cambridge CB2 1NL, United Kingdom}}
\affil[2]{\textit{IBM Research, USA}}
\affil[3]{\textit{Quantinuum, 1300 N 17th Street, Arlington, VA 22209
USA}}
\begin{document}
\maketitle

\begin{abstract}
Several quantum and classical Monte Carlo algorithms for Betti Number Estimation (BNE) on clique complexes have recently been proposed, though it is unclear how their performances compare. We review these algorithms, emphasising their common Monte Carlo structure within a new modular framework. We derive upper bounds for the number of samples needed to reach a given level of precision, and use them to compare these algorithms. By recombining the different modules, we create a new quantum algorithm with an exponentially-improved dependence in the sample complexity. We run classical simulations to verify convergence within the theoretical bounds and observe the predicted exponential separation, even though empirical convergence occurs substantially earlier than the conservative theoretical bounds.
\end{abstract}

\section{Introduction}

Given a graph, the clique complex is a geometric object that captures its clique information. An important topological invariant of a clique complex are the Betti numbers, $\beta_k$, which quantify the number of holes in the complex in a given dimension $k$. The Betti numbers have a long history in computational algebraic topology and data analysis: For fixed $k$, polynomial-time algorithms for Betti number estimation date back to the 1970s~\cite{Kannan1979}, and many recent applications of this problem have been found in the field of Topological Data Analysis~\cite{Carlsson2009,Amezquita2020, Skaf2022}. Despite this, it has been recently shown that, given as input a graph $G$ and dimension $k$, deciding if $\beta_k = 0$ in the clique complex of $G$ is $\texttt{QMA}_{1}\texttt{-Hard}$, and, when $G$ is clique-dense this decision problem is in \texttt{QMA}~\cite{Crichigno2022}. This means that, under widely-held computational assumptions, there is no efficient classical or quantum algorithm for computing the exact Betti numbers in all dimensions. 

This intractability of computing exact Betti numbers in arbitrary dimensions still leaves open the possibility of efficiently approximating $\beta_k$ divided by the number of $k$-simplices. This quantity is called the \emph{normalised} Betti number, and estimating it is the problem we consider here and refer to as BNE. BNE has a relatively short history compared to computing exact Betti numbers, appearing first in the literature of property testing for graphs~\cite{Elek2010}.  Fortunately, BNE is known to be in \texttt{BQP}~\cite{Cade2021} for general simplicial complexes that are efficiently sampleable, of which efficiently sampleable clique complexes are a special case. In favour of the power of quantum algorithms over classical, the same paper shows BNE to be classically intractable (\texttt{DQC1-Hard}) for general complexes~\cite{Cade2021} and leaves open the question of whether BNE remains classically intractable for dense clique complexes (the near term complexes of interest). Nevertheless, this \texttt{DQC1-Hardness} and the previously mentioned $\texttt{QMA}_{1}$-completeness of exact Betti number calculation of dense clique complexes provide strong evidence for the classical intractability of BNE for dense clique complexes.

In the absence of a definitive complexity result for BNE on clique complexes, progress has been made by designing classical and quantum algorithms for BNE with steadily improving asymptotic behaviour. The first \texttt{BQP} result for dense clique complexes was proved by Lloyd, Garnerone and Zanardi~\cite{Lloyd2016} by introducing a new polynomial time quantum algorithm based on quantum phase estimation. Ever since this result, new quantum~\cite{Hayakawa2022, McArdle, Gyurik2022, Akhalwaya2022} and classical~\cite{Apers2023, Berry2024} algorithms have been introduced, and it is still believed that there is a regime where quantum algorithms attain a super-polynomial advantage over classical algorithms for this problem~\cite{Berry2024}.

This paper studies quantum and classical algorithms for the BNE problem which share a similar Monte Carlo structure and take as input (classically provided) clique samples. This forms a natural sub-class of classical and quantum algorithms for the BNE problem. We choose to restrict our attention to this sub-class because we are interested in closely examining the \emph{exponential} speed-ups and sample costs of the different quantum Monte Carlo counter-parts to the classical random walk, ignoring any polynomial speed-ups of other quantum algorithms such as ~\cite{Lloyd2016, Berry2024} which process superpositions of cliques using non-random walk techniques.


The algorithms we consider produce an estimate by taking a matrix $M$ related to the combinatorial Laplacian of the clique complex, choosing a polynomial $p$ such that the trace of $p(M)$ is close to the normalised Betti number, and then performing a stochastic trace estimation of $p(M)$. We directly compare the Monte Carlo quantum algorithm of Akhalwaya et al.~\cite{Akhalwaya2022, akhalwaya2024topological} and the classical algorithms of Apers et al.~\cite{Apers2023}. The theoretical sample complexity bounds of these algorithms are presented in Table \ref{tab:comparison}. We note that the sample and query complexities presented in Table \ref{tab:comparison} do not have a clear dependence on the dimension $k$ or number of vertices in the complex $n$; These parameters will generally feature in the time complexity of performing one query in the algorithm, meaning either a step in the Markov chain for the classical algorithms, or a use of the relevant block-encoding for the quantum algorithms. More notably, the sample count for the quantum algorithm grows exponentially in $1/\sqrt{\delta}$, where $\delta$ is the spectral gap of the normalised Laplacian of the complex, and yields at most a polynomial advantage over the best classical algorithm, which we prove in Section~\ref{sec:nisq-tda}. We then introduce a new quantum algorithm which avoids this exponential sample count growth.

\begin{table}
    \centering
    \resizebox{\textwidth}{!}{%
    \begin{tabular}{|l|c|c|c|}
    \hline
        \textbf{Algorithm} & \textbf{Polynomial degree} $d$ & \textbf{Sample complexity} &\textbf{Query complexity}
        \\ \hline
        & & & \\[-2ex]
       \ibmOneName{} (Alg. \ref{alg:cap}) & $\frac{\log(1/\epsilon)}{\sqrt{\delta}}$ &
       $\mathcal{O}\left(d^2 \times 2^{10d}\right)$ & 
       $\mathcal{O}\left(d^3 \times 2^{10d}\right)$
       \\[1ex] \hline
       & & & \\[-2ex]
       \apersOneName{} (Alg. \ref{alg:apers_1}) & $\frac{\log(1/\epsilon)}{\delta}$ &
        $\mathcal{O}\left(\frac{1}{\epsilon^2} \times 2^{2d} \right)$  &
        $\mathcal{O}\left(\frac{1 }{\epsilon^2} \times d \times 2^{2d} \right)$
       \\[1ex] \hline
       & & & \\[-2ex]
       \apersTwoName{} (Alg. \ref{alg:apers_2}) & $\frac{\log(1/\epsilon)}{\sqrt{\delta}}$ & 
        $\mathcal{O}\left(d^3 \times 2^{8d}\right)$ &  $\mathcal{O}\left(d^4 \times 2^{8d}\right)$
       \\[1ex] \hline
       & & & \\[-2ex]
       \thisPaperName{} (Alg. \ref{alg:this_paper}) & $\frac{\log(1/\epsilon)}{\delta}$ &
        $\mathcal{O}\left(\frac{1}{\epsilon^2}\right)$  &
        $\mathcal{O}\left(\frac{1}{\epsilon^2} \times d\right)$
       \\[1ex] \hline 
    \end{tabular}}
    \caption{Comparison of the sample and query complexities of the four quantum and classical algorithms for estimating Betti numbers considered in this paper, where $\epsilon$ is the additive precision and $\delta$ is the spectral gap of the normalised combinatorial Laplacian of the complex. The sample and query complexities are given in terms of the polynomial degree $d$ given in the second column. We assume that the failure probability $\eta$ is constant. For more precise complexity formulas see Theorems~\ref{thm:complexity_ibm_one}, \ref{thm:complexity_apers1}, \ref{thm:complexity_apers}, \ref{thm:complexity_this_paper}, respectively.}
    \label{tab:comparison}
\end{table}

In Sections \ref{sec:top_background} and \ref{sec:trace_estimation}, we describe the background in topology and stochastic trace estimation which is necessary for the algorithms we consider. In Section \ref{sec:nisq-tda}, we review the Monte Carlo quantum algorithm of Akhalwaya et al.~\cite{Akhalwaya2022,Ubaru2021} which we refer to as \ibmOneName{} and present the revised complexity analysis mentioned in~\cite{akhalwaya2024topological}. Additionally, we prove that the sample count of this algorithm is exponential in $1/\sqrt{\delta}$. In Section \ref{sec:apers}, we review the two classical algorithms for normalised Betti number estimation introduced by Apers, Gribling, Sen and Szab\'{o}~\cite{Apers2023}, which we call \apersOneName{} and \apersTwoName{} and compare the complexities to \ibmOneName{}. In Section \ref{sec:power_method}, we recombine aspects of \apersOneName{} and \ibmOneName{} into a new quantum algorithm for BNE. We show that this results in a quantum algorithm which avoids the exponential dependence on $1/\delta$ present in \ibmOneName{} \footnote{Akhalwaya et al. in~\cite{akhalwaya2024topological} introduce a different quantum algorithm using qubitization which lies outside our lower-coherence Monte Carlo comparison framework.}. In Section \ref{sec:num_exp}, we simulate the algorithms on several small benchmark graphs and present both theoretical upper bounds and empirically observed sample counts for the minimum number of samples required for convergence.

\section{Simplicial complexes, Laplacians and Betti numbers}\label{sec:top_background}

 A \textit{simplicial complex} on a set $\{x_1,x_2,...,x_n\}$ is a collection of subsets $\Gamma$ of $\{x_1,x_2,...,x_n\}$ which is closed under subsets, meaning that if $\sigma$ is an element of $\Gamma$ and $\sigma^\prime$ is a subset of $\sigma$ then $\sigma^\prime$ is also in $\Gamma$. The elements of $\Gamma$ are called \textit{simplices} and a simplex is called a $k$-\textit{simplex} when its cardinality is $k+1$. Given a simplicial complex $\Gamma$ we write $S_k$ to denote the set of $k$-simplices in $\Gamma$. A simplicial complex of particular interest here is the \textit{clique complex} of a graph, which given a graph $G$ has a subset of the vertex set as a simplex if the vertices form a clique in $G$. We consider the vector space $\C \Gamma$ with the standard basis labelled by the elements of $\Gamma$. Furthermore, we define the $k^{th}$ \textit{boundary map} as the mapping $\partial_k : \C S_k \rightarrow \C S_{k-1}$ sending a $k$-simplex $\ket{\{x_{i_0},...,x_{i_k}\}}$ with $i_0 < i_1 < ... < i_k$ to
\begin{equation*}
    \partial_k \ket{\{x_{i_0},...,x_{i_k}\}} = \sum_{j=0}^k (-1)^{j+1} \ket{\{x_{i_0},...,x_{i_k}\} \setminus \{x_{i_j}\}},
\end{equation*}
and define the \textit{unrestricted boundary map} by $\partial = \oplus_{k=1}^n \partial_k$. A standard lemma in algebraic topology \cite[Lemma 2.1]{Hatcher2002} shows that $\partial_k \circ \partial_{k+1} = 0$, meaning that $\textnormal{Im}(\partial_{k+1}) \subseteq \ker(\partial_k)$, and therefore the quotient $\ker(\partial_k ) / \textnormal{Im}(\partial_{k+1})$ is well-defined, and called the $k^{th}$ \textit{homology group}. The $k^{th}$ \textit{Betti number} $\beta_k$ is then defined to be the dimension of this vector space, that is
\begin{equation*}
    \beta_k = \dim\Big( \ker(\partial_k ) / \textnormal{Im}(\partial_{k+1})\Big),
\end{equation*}
and is a quantitative expression for the number of $k$-dimensional holes in $\Gamma$. Similarly, the \textit{normalised Betti number} is defined as $\beta_k / |S_k|$. The $k^{th}$ \textit{combinatorial Laplacian} is defined as the mapping $\Delta_k : \C S_k \rightarrow \C S_k$
\begin{equation}\label{eqn:kLaplacian_boundary}
    \Delta_k = \partial_k^\dagger \partial_k + \partial_{k+1} \partial_{k+1}^\dagger. 
\end{equation}
The Combinatorial Hodge Theorem~\cite{Yu1983} shows that $\beta_k = \dim(\ker(\Delta_k))$, and therefore estimating the Betti number is the linear algebra task of calculating the nullity of the Laplacian \cite{Lim2020}. 
From the definition given in (\ref{eqn:kLaplacian_boundary}) we can see that $\Delta_k$ is positive semi-definite by noting that $\Delta_k = ( \partial_{k} + \partial^{\dagger}_{k+1})^{\dagger}( \partial_{k} + \partial^{\dagger}_{k+1})$. By the Laplacian Matrix Theorem of \cite[Theorem 3.4.4]{Goldberg2002} the diagonal elements of $\Delta_k$ are bounded by $n$, the size of the simplicial complex. This means that the eigenvalues of $\Delta_k$ are strictly in the range $[0,n]$.  Throughout the paper, we consider the normalised Laplacian $\tilde{\Delta}_k := \frac{1}{n}\Delta_k$ and its reflection $I - \tilde{\Delta}_k$, both of which have eigenvalues in $[0,1]$.
The nullity of $\tilde{\Delta}_k$ is equal to the nullity of $\Delta_k$ and the dimension of the $1$-eigenspace of $I-\tilde{\Delta}_k$. We assume throughout that $\delta > 0$ is a lower bound for the smallest positive eigenvalue of $\tilde{\Delta}_k$. These facts are used by the algorithms presented in this paper to estimate $\beta_k$.

In this paper we are interested in computing an additive estimate of the normalised Betti numbers, according to the following definition.
\begin{definition}
    Let $\epsilon, \eta >0$. We say that the estimator $\hat{\beta}_k$ is an $(\epsilon, \eta)$-\textit{estimator} if $\Pr\big[ \big|\hat{\beta}_k - \frac{\beta_k}{|S_k|}\big| \geq \epsilon \big] \leq \eta$.
\end{definition}
To map a simplicial complex onto the computational basis states of $(\C^2)^{\otimes n}$ we associate a $k$-simplex $\sigma$ in $\Gamma$ with the computational basis state $\ket{a_1,...,a_n} \in (\C^2)^{\otimes n}$ where $a_i = 1$ if $x_i \in \sigma$ and $a_i=0$ otherwise. Other more compact mappings from simplices to qubits are explored, for example, by McArdle et al.~\cite{McArdle} who also provide circuit constructions for the combinatorial Laplacian. These circuits are deeper than those presented in this paper.

\section{Stochastic trace estimation}
\label{sec:trace_estimation}

Let $M$ be an $N\times N$ matrix. The normalised trace of $M$ is $\frac{1}{N}\tr(M)$, which is the same as the \emph{average eigenvalue} of $M$. Each algorithm for Betti number estimation in this paper relies on a framework for estimating normalised traces called \textit{stochastic trace estimation}. This is typically applied to matrices $M$ which are too large to store directly but have efficient procedures for computing matrix-vector products such as $\bra{x}M\ket{x}$ for a vector $\ket{x}$. Rather than compute each of the diagonal entries of $M$, we define a random variable $X_M$ with expectation $\frac{1}{N}\tr(M)$ so that sampling from $X_M$ and averaging these samples gives an estimate for $\frac{1}{N}\tr(M)$. 
A typical example is to define $X_M = \bra{x} M \ket{x}$ where $\ket{x}$ is a vector chosen uniformly at random from the standard basis of $\C^N$.  
To estimate the number of samples required to achieve an $\epsilon$-close approximation of the desired trace we make use of a well-known concentration inequality presented in Lemma \ref{thm-hoeffding}. For more background on stochastic trace estimation and its applications, we refer~\cite{Avron2011, ubaru2017applications,chen2022randomized}. 

\begin{lemma}[Hoeffding's Inequality \cite{Hoeffding1963}]
\label{thm-hoeffding}
Let $X_1,..., X_q$ be independent random variables such that $a_i \leq X_i \leq b_i$ almost surely, and write $s_q = \sum_{i=1}^q X_i$. Then
\[ \Pr \Big[ \big|s_q - \E[s_q]  \big| \geq t \big] \leq 2 \exp\Big(\frac{-2t^2}{\sum_{i=1}^q(b_i -a_i)^2} \Big) \]
\end{lemma}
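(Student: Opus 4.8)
The plan is to prove this via the standard Chernoff (exponential moment) method combined with a moment generating function bound for bounded random variables. First I would center the variables by setting $Y_i = X_i - \E[X_i]$, so that each $Y_i$ has mean zero and lies in an interval $[a_i - \E[X_i], b_i - \E[X_i]]$ of the same width $b_i - a_i$, and $s_q - \E[s_q] = \sum_{i=1}^q Y_i$. The problem then reduces to bounding the upper tail $\Pr[\sum_i Y_i \geq t]$, after which the lower tail follows by applying the identical argument to $-Y_i$, with a union bound contributing the overall factor of $2$.

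For the upper tail, I would introduce a free parameter $s > 0$ and apply Markov's inequality to the nonnegative random variable $e^{s \sum_i Y_i}$, giving $\Pr[\sum_i Y_i \geq t] \leq e^{-st}\, \E[e^{s \sum_i Y_i}]$. Independence of the $Y_i$ factorizes the moment generating function as $\prod_{i=1}^q \E[e^{s Y_i}]$, which reduces the whole estimate to controlling each single-variable moment generating function.

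The key step, and the main obstacle, is Hoeffding's lemma: for a mean-zero random variable $Y$ supported in $[a,b]$, one has $\E[e^{sY}] \leq \exp(s^2 (b-a)^2 / 8)$. I would prove this by studying the cumulant generating function $\psi(s) = \log \E[e^{sY}]$, noting that $\psi(0) = 0$ and $\psi'(0) = \E[Y] = 0$, and computing that $\psi''(s)$ equals the variance of $Y$ under the exponentially tilted measure with density proportional to $e^{sY}$. Since that tilted variable is still supported in $[a,b]$, Popoviciu's inequality bounds its variance by $(b-a)^2/4$, so $\psi''(s) \leq (b-a)^2/4$ uniformly in $s$. A second-order Taylor expansion of $\psi$ about $0$ then yields $\psi(s) \leq s^2 (b-a)^2 / 8$, which is exactly the claimed subgaussian bound. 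This tilting-and-Taylor argument is the technical heart of the proof; the convexity estimate on the variance is where all the work lives.

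Finally, combining these pieces gives $\Pr[\sum_i Y_i \geq t] \leq \exp\big(-st + \tfrac{s^2}{8}\sum_{i=1}^q (b_i - a_i)^2\big)$, and I would optimize over the free parameter by choosing $s = 4t / \sum_i (b_i-a_i)^2$, which minimizes the quadratic exponent and produces the exponent $-2t^2 / \sum_i (b_i - a_i)^2$. Doubling to account for both tails then completes the proof.
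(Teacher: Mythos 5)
Your proposal is correct, but there is no in-paper argument to compare it against: the paper states this lemma as a classical result, citing Hoeffding's 1963 paper, and uses it purely as a black box in its sample-complexity analyses (e.g.\ in Theorem \ref{thm-IBM_betti_estimate}, Theorem \ref{thm:complexity_apers} and Theorem \ref{bettihat_algorithm}); it gives no proof of its own. What you have written is the standard complete proof: centering, the Chernoff exponential-moment method with the moment generating function factorized by independence, the single-variable bound $\E[e^{sY}] \leq \exp\big(s^2(b-a)^2/8\big)$ for a mean-zero variable supported in $[a,b]$, and optimization over the free parameter. Your route to the key lemma--writing $\psi(s) = \log \E[e^{sY}]$, identifying $\psi''(s)$ as the variance under the exponentially tilted measure, bounding it by $(b-a)^2/4$ via Popoviciu's inequality, and finishing with a second-order Taylor expansion--is sound, and the arithmetic checks out: the choice $s = 4t/\sum_{i=1}^q (b_i-a_i)^2$ yields the exponent $-2t^2/\sum_{i=1}^q (b_i-a_i)^2$, and the union bound over the two tails supplies the factor $2$. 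The only remark worth adding is that Hoeffding's original proof of the single-variable bound proceeds differently, using convexity of $x \mapsto e^{sx}$ to compare $Y$ against a two-point distribution on $\{a,b\}$ and then maximizing an explicit function of $s$; your tilting-plus-Taylor argument is the cleaner, now-standard alternative, and either would serve to make the paper self-contained.
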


In this section, we review two modifications of this approach which are used in normalised Betti number estimation.

\subsection{Classical stochastic trace estimation for powers of sparse matrices}\label{sec:classical-ste}
In the classical algorithms described in Section~\ref{sec:apers}, we consider computing the trace of some power $M^d$ of a sparse matrix $M$. A random variable is generated by first choosing a random basis element $\ket{x}$ of the space acted on by $M$ and sampling from a Markov chain computed from the columns of $M$. This method, described in detail by Apers et al.~\cite{Apers2023}, is presented below in Algorithm \ref{alg:markov_chain}. 
In this setting, $M$ is a matrix whose rows and columns are indexed by some set of $n$-bit strings and we make two assumptions about it. Firstly, we assume the existence of an efficient algorithm \textsc{RandomRowIndex}$_M$ which generates a random row from this set. This allows us to generate random basis vectors $\ket{x}$ efficiently. The second assumption is that $M$ is $\mathbf{poly}(n)$-sparse which is equivalent to the existence of an efficient function \textsc{SparseRow}$_M$ which for any row index $i$ returns the row $M_{i,\cdot}$ as a $\mathbf{poly}(n)$-sized dictionary.
This sparsity allows us to generate an unbiased estimate of $\bra{x}M^d\ket{x}$ as follows. Firstly, we create a Markov Chain on the rows of the matrix with transition probabilities $P(x_i, x_j) = |M_{x_ix_j}|/ ||M_{x_i,\cdot}||_1$. We can then define a random variable $$Y_d = \braket{x_d|x_0}\prod_{j=0}^{d-1} \text{sign}(M_{x_{j}x_{j+1}})||M_{x_j,\cdot}||_1 $$  

\noindent
where $x_0$ is a random row generated by \textsc{RandomRowIndex$_M$} and $x_1, x_2, \ldots, x_d$ are successive random rows from the Markov Chain starting at $x_0$. This is an unbiased estimate of $\bra{x_0}M^d\ket{x_0}$ in the sense that $\mathbb{E}[Y_d] = \mathbb{E}[\bra{x_0}M^d \ket{x_0}]$. Furthermore, the norm can be bounded as 

$$|Y_d| = \prod_{j=0}^{d-1}||M_{x_j,\cdot}||_1 \leq \|M\|^d_1.$$

\noindent
For more details on this process and its analysis, see \cite{Apers2023}. As we see in Section \ref{sec:apers}, this bound can be used with Hoeffding's inequality to prove a bound on the required samples to estimate $\frac{1}{N}\tr(M^d)$.

In order to compare algorithms which use Algorithm~\ref{alg:markov_chain} as a subroutine we introduce the notion of classical sample complexity and classical query complexity. For such an algorithm \texttt{A}, the sample complexity will be the number of samples generated throughout the algorithm using Algorithm~\ref{alg:markov_chain} and the query complexity will be the number of calls to the \texttt{SparseRow}$_M$ function which queries a row of the matrix $M$.  As the \texttt{SparseRow}$_M$ oracle will take a fixed amount of time to run depending on the input matrix, we use the query complexity as an estimate of the cost to run $\texttt{A}$ on a given input. The sample complexity on the other hand gives an indication of how many independent processes are needed in running $\texttt{A}$, which can potentially be done in parallel.  

\begin{definition}
    Let \texttt{A} be any algorithm which makes $N$ calls to \texttt{EstimateSparseTrace} with inputs 
    \[ \{(\texttt{SparseRow}_M,d_i, q_i ,\texttt{RandomRowIndex}_M)\}_{1 \leq i\leq N} \] 
    where $N, d_i$ and $q_i$ are, in general, functions of the inputs of \texttt{A}. We define the \emph{classical sample complexity of \texttt{A}} to be the total number $\sum_i q_i$ of all samples generated using \texttt{EstimateSparseTrace}. We define the \emph{classical query complexity of \texttt{A}} to be the total number $\sum_i d_iq_i$ of all queries to \texttt{SparseRow}$_M$ in the algorithm.  
\end{definition}

\begin{algorithm}[h!]
\caption{Markov Chain Trace Estimation of a sparse matrix $M$,  Apers, Gribling, Sen, Szab\'{o}\cite{Apers2023}}\label{alg:markov_chain}
\begin{algorithmic}[1]
\Let {$M$ be a $N\times N$ sparse matrix, where the rows are indexed by a set $S$ of $n$-bit strings.} 
\Require{A function \Call{SparseRow$_M$}{} as described in Section \ref{sec:trace_estimation}.}
\Require {$d$, positive integer denoting the power $M$ is raised to. }
\Require{$q$, the number samples to be taken.}
\Require {A function \Call{RandomRowIndex$_M$}{} which efficiently generates a random element of $S$.}
\Ensure {An estimate for the normalised trace $\frac{1}{N}\tr(M^d)$.}

\Procedure{EstimateSparseTrace}{\textsc{SparseRow$_M$}, d, q, \textsc{RandomRowIndex$_M$}}

\For{$l = 1,\ldots, q$}
    $i_0 \gets$ \textsc{RandomRowIndex$_M$}()
    \For{$k = 0, \ldots, d$}
    
    \State Select a new row index $i_{k+1}$ from the set \textsc{SparseRow$_M$}($i_k$) with probability $P(i_{k+1}, i_k) = M_{i_k, i_{k+1}}/\|M_{i_k, \cdot}\|_1$ 
    \EndFor
    
    \LineComment{Define a value, $s_l$, which approximates $M_{i,i} = \bra{i}M^d\ket{i}$ as follows.}
    \If{$i_{d} = i_0$}
    
    \State $s_l \gets \prod^{j<d}_{j=0} \mathbf{sign}(M_{i_j,i_{j+1}})\|M_{i_j, \cdot}\|_1$
    
    \Else 
    
    \State $s_l \gets 0$
    
    \EndIf
\EndFor

\State \Return $\frac{1}{q}\sum_{l=1}^q s_l$
\EndProcedure
\end{algorithmic}
\end{algorithm}

\subsection{Quantum stochastic trace estimation for positive semi-definite matrices}\label{sec:quantum-ste}
Quantum algorithms for estimating the trace of a unitary matrix have a long history in the field of quantum algorithms, particularly popularised with the Hadamard test used by Aharonov, Jones and Landau in their work on estimating the Jones polynomial~\cite{Aharonov2009}. In recent work on quantum algorithms for Betti numbers, Akhalwaya et al.~\cite{Akhalwaya2022} described an alternative method which is specialised to positive semi-definite matrices which admit a form of block-encoding. This procedure, summarised in Algorithm~\ref{alg:block_encoded_trace}, is central to the quantum algorithms in this paper.

A Hermitian $N \times N$ matrix $M$ is said to be \emph{positive semi-definite} if all of its eigenvalues are real and nonnegative. This is equivalent to the existence of a matrix $D$ such that $M = D^{\dagger}D$. For any such matrix we can rewrite terms of the form $\bra{x}M\ket{x}$ as $\bra{x}D^{\dagger}D\ket{x} = \|D\ket{x}\|^2$, and therefore the trace of $M$ can be expressed as $\tr(M)=\sum_{x} \|D\ket{x}\|^2$. For the trace of higher powers of $M$, say $M^d$, we achieve a similar expression.

\begin{lemma}
\label{lem:psd_mat_decomp}
    If $M$ is a Hermitian positive semi-definite matrix written as $M=D^\dagger D$, then for every positive integer $d$ the trace of $M^d$ is equivalently
\begin{equation*}
    \tr(M^d) = \sum_{x} \| D^{(d)}\ket{x} \|^2,
\end{equation*}
where $D^{(d)}$ is defined as $(D^\dagger D)^{d/2}$ for $d$ even, and $(DD^\dagger)^{\frac{d-1}{2}}D$ for $d$ odd. 
\end{lemma}

\begin{proof}
    If $d$ is even then we have $\bra{x}M^d\ket{x} = \bra{x}(D^\dagger D)^d \ket{x} = \|(D^\dagger D)^{d/2}\ket{x}\|^2$, and for $d$ odd we similarly have
    $\bra{x}M^d\ket{x} = \bra{x}D^\dagger(DD^\dagger)^{d-1}D \ket{x} = \|(DD^\dagger)^{\frac{d-1}{2}}D\ket{x}\|^2$. 
\end{proof}

This fact allows us to create an unbiased trace estimator using a particular type of quantum circuit encoding $D$, as we next explain.

\begin{definition}[Block-encoding]
For an $N\times N$ matrix $D$ whose rows and columns are indexed by a subset $S$ of $n$-bit strings, a \emph{block-encoding} of $D$ with $a$ auxiliary qubits is a $2^{n+a}\times 2^{n+a}$ unitary matrix $U_D$ with the property that, for any $n$-bit string $x \in S$, we have
\begin{equation}
    U_D\ket{0^a}\ket{x} = \ket{0^a} D\ket{x} + \ket{\psi},
\end{equation}
for some state $\ket{\psi}$ that is orthogonal to $\ket{0^a}\ket{y}$ for all $y \in S$. 
\end{definition}
Now observe that for a given bitstring $x \in S$, if we create the state
\begin{equation*}
     \ket{0^a}D^{(d)}\ket{x} + \ket{\psi}
\end{equation*}
for some state $\ket{\psi}$ orthogonal $\ket{0^a}\ket{y}$ for all $y \in S$, then after measuring the auxiliary qubits we obtain the outcome $0^a$ with probability $\|D^{(d)}\ket{x}\|^2$. If we repeat this process for a uniformly random bitstring $x \in S$ then the expected value is $\sum_x \frac{1}{N}\|D^{(d)}\ket{x}\|^2 = \frac{1}{N}\tr(M^d)$, as desired. However, since we only have access to a block-encoding of $D$ or $D^\dagger$, it is not immediately clear how to prepare this state. In Algorithm~\ref{alg:block_encoded_trace} we instead successively apply the block-encoding of $D$ or $D^\dagger$, measure the auxiliary register, and only proceed if the outcome is $0^a$. As we show in Lemma~\ref{lem:quantum_trace_est_alg_correctness}, the probability of succeeding after $d$ of these steps is precisely $\frac{1}{N}\tr(M^d)$. The details of this process are formalised in Algorithm \ref{alg:block_encoded_trace}, and we next show that it indeed estimates the desired trace. For a proof of this, see Appendix~\ref{app:L3.3}.

\begin{lemma}
\label{lem:quantum_trace_est_alg_correctness}
    For each $i=1,...,q$, the random variable $s_i$ output by Algorithm~\ref{alg:block_encoded_trace} is a Bernoulli random variable with expectation $\frac{1}{N}\tr(M^d)$. 
\end{lemma}

In order to compare quantum algorithms which use Algorithm \ref{alg:block_encoded_trace} as a subroutine we introduce the notion of quantum sample complexity and quantum query complexity. For such an algorithm \texttt{B}, the sample complexity will be the number of samples generated throughout the algorithm using Algorithm \ref{alg:block_encoded_trace}, and the query complexity will be the number of calls to the either of the block-encodings $U_D$ or $U_{{D^{\dagger}}}$. As this oracle will have a fixed circuit size depending on the input matrix, we use the query complexity as an estimate of the cost to run $\texttt{B}$ on a given input. The sample complexity on the other hand gives an indication of how many independent quantum processes are needed to run $\texttt{B}$, which can potentially be done in parallel. 

\begin{definition}
    Let \texttt{B} be any algorithm which makes $N$ calls to \texttt{EstimateBlockEncoding} with inputs 
    \[\{(U_D,d_i, q_i ,\texttt{RandomRowIndex}_M)\}_{1 \leq i\leq N}\]
    where $N, d_i$ and $q_i$ are, in general, functions of the inputs of \texttt{A}. We define the \emph{quantum sample complexity of \texttt{A}} to be the total number $\sum_i q_i$ of all samples generated using \texttt{EstimateFromBlockEncoding}. We define the \emph{quantum query complexity of \texttt{A}} to be the total number $\sum_i d_iq_i$ of all queries to $U_D$ or $U_{D^{\dagger}}$ in the algorithm.  
\end{definition}

When $M=D^{\dagger}D$ also admits a classical oracle \texttt{SparseRow}$_M$, these measures are good analogues for the classical sample and query complexities defined in Section \ref{sec:classical-ste}. This is because the query complexity captures the number of uses of a fixed-cost oracle (\texttt{SparseRow}$_M$ in the classical case and $U_D$ in the quantum case) and the sample complexity captures the number of quantum or classical processes that are employed to generate samples.

\begin{algorithm*}[h!]
\caption{Quantum Trace Estimation of a Positive Semi-Definite Matrix with Block Encoding \cite{Akhalwaya2022}}\label{alg:block_encoded_trace}
\begin{algorithmic}[1]
\Let {$M$ be a $N\times N$ positive semi-definite matrix with $M = D^{\dagger} D$ whose rows are indexed by a set $S$ of $n$-bit strings, and \textsc{RandomRowIndex$_M$} a function which efficiently generates a random element of $S$.}
\Require {$U_D$ circuit with $n+a$ qubits which block-encodes $D$; a positive integer $d$ denoting the power $M$ is raised to; a number of samples $q$ }

\Ensure An estimate for the normalised trace $\frac{1}{N}\tr(M^d)$.

\Procedure{EstimateFromBlockEncoding}{$U_D, d, q,$\textsc{RandomRowIndex$_M$}}
\For{$i = 1, \ldots q$}
\State $x \gets $ \textsc{RandomRowIndex$_M$}()
\State $\ket{\phi} \gets \ket{0^a}\ket{x}$ \Comment{Create the initial quantum state from the $n$-bit string $x$.}
\For{$j= 1, \ldots, d$}
\If{$j$ is odd}
\State $\ket{\phi} \gets U_D\ket{\phi}$ \Comment{Apply the quantum circuit $U_D$ to $\ket{\phi}$}
\Else
\State $\ket{\phi} \gets U_{D^{\dagger}}\ket{\phi}$ \Comment{$U_{D^{\dagger}}$ can be constructed as $(U_{D})^{\dagger}$.}
\EndIf
\State Measure auxiliary qubits of $\ket{\phi}$ in computational basis
\If{outcome is $0^a$}
\State \textbf{Continue} to next $j$
\Else
\State $s_i \gets 0$; \textbf{Continue} to next $i$ 
\EndIf
\EndFor
\State $s_i \gets 1$

\EndFor
\State \Return $\frac{1}{q}\sum_{i=1}^q s_i$

\EndProcedure
\end{algorithmic}
\end{algorithm*}


\section{The \ibmOneName{} algorithm}\label{sec:nisq-tda}

\subsection{Outline}\label{sec:nisq-tda-outline}

To our knowledge the first proposed quantum algorithm for normalised Betti number estimation which did not use primitives such as Hamiltonian evolution and phase estimation was presented in the work of Akhalwaya, Ubaru et al.\ across a number of papers~\cite{Ubaru2021,Akhalwaya2022,Akhalwaya2022-2} \footnote{Culminating in a recent ICLR paper \cite{akhalwaya2024topological}, which forward references this paper for the revised complexity analysis of their Monte Carlo  algorithm.}. This work introduced several innovations which opened up the possibility of performing normalised Betti number estimation on near-term devices. Their proposed algorithm works by first choosing a polynomial $p(x)=\sum_{i=0}^d a_i x^i$ such that the trace of $p(\tilde{\Delta}_k)$ is approximately $\beta_k / |S_k|$. They provide a modular circuit for block-encoding $\tilde{\Delta}_k$, which we review in Section \ref{sec:block_encoding}. This block-encoding is then used to perform stochastic trace estimation on the powers $\tilde{\Delta}_k^i$ via Algorithm~\ref{alg:block_encoded_trace}. The traces of the matrices $\tilde{\Delta}_k, \tilde{\Delta}_k^2,...,\tilde{\Delta}_k^d$ are then summed according to the polynomial $p$ to give the normalised Betti number estimate. The full algorithm is summarised in Algorithm \ref{alg:cap}. 

\begin{algorithm}
\caption{The \ibmOneName{} Algorithm  ~\cite{Ubaru2021,Akhalwaya2022}}\label{alg:cap}
\begin{algorithmic}[1]
\Require {$\mathcal{G}$, a graph on $n$ vertices with clique complex $\Gamma$; a desired dimension $k$ such that $1 \leq k \leq n-1$; a desired error $\epsilon >0$; a desired failure probability $\eta >0$}
\Ensure {An $(\epsilon, \eta)$-estimator $\hat{\beta}_k$ of the $k^{\textit{th}}$ normalised Betti number of $\Gamma$.}
\Let $\normalLap$ be the normalised $k^{\textit{th}}$ combinatorial Laplacian of $\Gamma$.
\Let  $R_{k}$ be an efficient random sampler of $k$-simplices from $\Gamma$. 
\Let $\delta \in (0, 1/2]$ a lower bound on smallest positive eigenvalue of  $\normalLap$.
\Procedure{\ibmOneName}{}
\State Define the $(\epsilon/2, \delta)$-filter polynomial $p(x)=\sum_{i=0}^d a_i x^i$ as described in Lemma~\ref{lem-chebypoly}.
\State $d \gets \frac{1}{\sqrt{\delta}} \log(4/\epsilon)$
\State $q \gets \lceil \frac{2\log(2/\eta)}{\epsilon^2}\|p\|_2^2 \rceil $ \Comment{See Theorem \ref{thm-IBM_betti_estimate}}
\State Let $U_D$ be the block encoding of a matrix $D$ such that $D^\dagger D = \normalLap$
\For{i = 1, \ldots, d}
\State $\hat{\mu}^{(i)} \gets $\Call{EstimateFromBlockEncoding}{$U_D$, i, q, $R_k$}
\EndFor
\State \Return $a_0 + a_1\hat{\mu}^{(1)} + \ldots + a_d \hat{\mu}^{(d)}$ 
\EndProcedure
\end{algorithmic}
\end{algorithm}

In this section we describe the methods used in each of these steps and give a new assessment of the time complexity of this algorithm~\cite{Akhalwaya2022}. In particular, the complexity analysis we give in Section~\ref{sec:IBM_complexity} shows that the number of uses of a block-encoding of $\tilde{\Delta}_k$ scales with the $2$-norm of the polynomial $p(x)$ chosen above. We additionally show in Section~\ref{sec:polyapprx} that the polynomial considered in \cite{Akhalwaya2022} has $2$-norm that grows exponentially in its degree, which then leads to a term in the algorithm's complexity scaling exponentially with $1/\delta$.

\subsection{Quantum circuits for the Laplacian}\label{sec:block_encoding}

The quantum circuit used in this algorithm is built from a number of simple operators considered by Akhalwaya et al. that we summarise now \cite{Akhalwaya2022, Akhalwaya2022-2}. First, the normalised Laplacian can be expressed as a product of operators as
\begin{equation}\label{eqn:decomp}
    \tilde{\Delta}_k = P_{k}P_{\Gamma}\Big(\frac{1}{\sqrt{n}}B\Big) P_{\Gamma} \Big(\frac{1}{\sqrt{n}}B\Big) P_{\Gamma} P_{k}.    
\end{equation}
Here, $B=\partial + \partial^{\dagger}$ denotes the unrestricted boundary operator, $P_{\Gamma} = \sum_{\sigma\in \Gamma} \ketbra{\sigma}{\sigma}$ denotes the projection onto the simplices of the clique complex, and $P_k$ is the projection onto the Hamming weight $k$ subspace of the $n$-qubit Hilbert space. The operator $\frac{1}{\sqrt{n}}B$ is both Hermitian and unitary, and a quantum circuit construction for it using the Jordan-Wigner transform was given in \cite{Akhalwaya2022-2}. Additionally, block-encodings for the projections $P_\Gamma$ and $P_k$ are described in \cite{Akhalwaya2022} using a circuit of Toffoli gates and a Quantum Fourier transform, respectively. The identity in (\ref{eqn:decomp}) allows us to express the normalised Laplacian in the form $\tilde{\Delta}_k =\tilde{B}^\dagger \tilde{B}$, where $\tilde{B}$ is called the restricted boundary operator. This form allows for the use of quantum stochastic trace estimation algorithm described in Algorithm~\ref{alg:block_encoded_trace} by setting $D=\tilde{B}=P_{\Gamma} \big(\frac{1}{\sqrt{n}}B\big) P_{\Gamma} P_{k}$. The circuit implementation of $\frac{1}{\sqrt{n}}B$ and block-encodings of $P_\Gamma$ and $P_k$ previously mentioned yield a block-encoding $U_D$ of $D$ by successively applying each of the circuits for $P_k, P_\Gamma$ and $\frac{1}{\sqrt{n}}B$, measuring the ancilla register in between each application and proceeding only if the measurement outcome is $0^a$. For higher powers of the normalised Laplacian, say $\tilde{\Delta}_k^d$, we can write $\tilde{\Delta}_k^d = (D^{(d)})^{\dagger}D^{(d)}$ using the construction discussed in Section~\ref{sec:trace_estimation} to obtain a block-encoding $U_{D^{(d)}}$ of $D^{(d)}$.

\subsection{Polynomial constructions}
\label{sec:polyapprx}

In this section we describe a family of polynomials considered in the Betti number estimation algorithm of \cite{Akhalwaya2022} that is used in Algorithm~\ref{alg:cap}. We first give general conditions for a polynomial $p(x)$ to yield an appropriate approximation of the normalised Betti number when applied in the context of Algorithm~\ref{alg:cap}. We will then show that the polynomials used in \cite{Akhalwaya2022} have $2$-norm growing exponentially with their degree, which we then show leads to an exponential term in the algorithm's time-complexity. 

\begin{lemma}
\label{lem:filter_poly}
    Let $\epsilon > 0$ be given, and let $\delta$ denote the smallest positive eigenvalue of the normalized Laplacian $\tilde{\Delta}_k$. If $p(x)$ is a real polynomial with the property that $p(0)=1$ and $|p(x)|<\epsilon$ for all $x \in [\delta, 1]$, then 
\begin{equation}
    \Big|\frac{1}{|S_k|}\tr\big(p(\tilde{\Delta}_k)\big) - \frac{1}{|S_k|}\beta_k\Big| < \epsilon.
\end{equation}
\end{lemma}

\begin{proof}
Since the trace of a diagonalisable matrix is the sum of its eigenvalues, and since $\beta_k$ is the dimension of the kernel of $\tilde{\Delta}_k$, it follows that
\begin{equation*}
    \tr\big(p(\tilde{\Delta}_k)\big) = \beta_k + \sum_{\lambda > 0} p(\lambda),
\end{equation*}
where the above sum is over the positive eigenvalues of $\tilde{\Delta}_k$, counting multiplicity. As $|p(x)| < \epsilon$ for $x \in [\delta, 1]$ and $\tilde{\Delta}_k$ is an $|S_k| \times |S_k|$ matrix then we obtain the upper and lower bounds
\begin{equation*}
    \beta_k - \epsilon |S_k| \leq \tr\big(p(\tilde{\Delta}_k)\big) \leq \beta_k + \epsilon |S_k|,
\end{equation*}
or equivalently
\begin{equation}
    \Big|\frac{1}{|S_k|}\tr\big(p(\tilde{\Delta}_k)\big) - \frac{1}{|S_k|}\beta_k\Big| < \epsilon.
\end{equation}
\end{proof}

Lemma~\ref{lem:filter_poly} shows that we can estimate the normalised Betti number by an estimation of the normalised trace of $p(\tilde{\Delta}_k)$ for an appropriately chosen polynomial $p(x)$. By linearity of the trace, the normalised trace of $p(\tilde{\Delta}_k)$ can be estimated by estimating the normalised traces of $\tilde{\Delta}_k,\tilde{\Delta}_k^2,...,\tilde{\Delta}_k^d$ and summing according to the coefficients of $p(x)$. The conditions of Lemma~\ref{lem:filter_poly} motivates the following definition.

\begin{definition}
    We will say that a real function $f(x)$ is an $(\epsilon,\delta)$-\textit{filter} if $f(0)=1$ and $|f(x)|<\epsilon$ for all $x \in [\delta, 1]$.
\end{definition}

In \cite{Akhalwaya2022}, Akhalwaya et al. consider a polynomial filter based on the Chebyshev polynomials, which we now describe. We let $T_d(x)$ denote the $d^{th}$ degree Chebyshev polynomial of the first kind, defined by the recurrence $T_0(x)=1$, $T_1(x)=x$ and
\begin{equation}
\label{eqn:chebyrecurrence}
  T_{d+1}(x)=2xT_d(x)-T_{d-1}(x),  
\end{equation}
for $d \geq 1$. It was shown in \cite[Proposition 1]{Akhalwaya2022} that by shifting and scaling a Chebyshev polynomial of sufficiently large degree we obtain an $(\epsilon, \delta)$-filter. 

\begin{lemma}
\label{lem-chebypoly}
Let $\epsilon, \delta > 0$ be given. For all $d \geq \frac{1}{\sqrt{\delta}} \log(2/\epsilon)$ the polynomial
\begin{equation}
\label{eqn:chebypoly}
    T_d\Big(\frac{1-x}{1-\delta}\Big)/T_d\Big(\frac{1}{1-\delta}\Big),
\end{equation}
is an $(\epsilon, \delta)$-filter.
\end{lemma}

Define the $2$-norm of a polynomial $p(x)=a_0 + a_1 x + ... + a_d x^d$ as $\|p\|_2 := \sqrt{a_0^2 + a_1^2+...+a_d^2}$. Next we show that the $2$-norm of the polynomial in Lemma~\ref{lem-chebypoly} is exponentially large in its degree. We defer the proof to the Appendix~\ref{app:L4.3}.

\begin{lemma}
\label{lem-coef_bound}
Let $\epsilon > 0$ and $\delta \in (0, 1/2]$ be given, and let $p(x)$ denote the polynomial 
\begin{equation}
    p(x)=T_d\Big(\frac{1-x}{1-\delta}\Big)/T_d\Big(\frac{1}{1-\delta}\Big).
\end{equation}
For $d \geq \frac{1}{\sqrt{\delta}} \log(2/\epsilon)$, the polynomial $p(x)$ has $2$-norm which satisfies
\begin{equation*}
  \frac{1}{d+1}(9/4)^d \leq \|p\|_2^2 \leq \epsilon^2 d 2^{10d}.
\end{equation*}
\end{lemma}
\subsection{Complexity analysis}
\label{sec:IBM_complexity}

In this section we prove the correctness and give the complexity of the \ibmOneName{} algorithm described in Algorithm \ref{alg:cap}.
Consider the quantities $\hat{\mu}^{(j)}$ as defined in Algorithm~\ref{alg:cap}. Each $\hat{\mu}^{(j)}$ can be expressed as an average of random variables, say $\hat{\mu}^{(j)} = \frac{1}{q}\sum_{i=1}^q \hat{\mu}_i^{(j)}$, where $\hat{\mu}_i^{(j)}$ are random variables taking value $0$ or $1$, as described in Algorithm~\ref{alg:block_encoded_trace}. 
The normalised Betti number estimate resulting from Algorithm~\ref{alg:cap} is
\begin{equation}
\label{eqn:betti_estimate1}
    \hat{\beta}_k = a_0 + a_1\hat{\mu}^{(1)} + \ldots + a_d \hat{\mu}^{(d)}.
\end{equation}
We can now give the number of samples $q$ required for this estimator to be an $(\epsilon, \eta)$-estimator.

\begin{theorem}
\label{thm-IBM_betti_estimate}
    Let $\epsilon, \eta > 0$ be given, and suppose $p(x) = a_0 + a_1 x + ... + a_d x^d$ is a real polynomial that is an $(\epsilon/2, \delta)$-filter. The normalised Betti number estimate $\hat{\beta}_k$ output of Algorithm~\ref{alg:cap} is an $(\epsilon, \eta)$-estimator.
\end{theorem}

\begin{proof}
    Applying the triangle inequality and Lemma~\ref{lem:filter_poly} we obtain
\begin{equation*}
\begin{split}
    \Big|\hat{\beta}_k - \frac{\beta_k}{|S_k|}\Big| &\leq  \Big|\hat{\beta}_k - \frac{1}{|S_k|}\tr\big(p(\tilde{\Delta}_k)\big) \Big| + \Big| \frac{1}{|S_k|}\tr\big(p(\tilde{\Delta}_k)\big) - \frac{\beta_k}{|S_k|}\Big| \\
    &\leq \Big|\hat{\beta}_k - \frac{1}{|S_k|}\tr\big(p(\tilde{\Delta}_k)\big) \Big| + \frac{\epsilon}{2}, \\
\end{split}
\end{equation*}
and therefore we have the lower bound
\begin{equation}
\label{eqn:prob_bound}
    \Pr\Big[ \Big|\hat{\beta}_k - \frac{\beta_k}{|S_k|}\Big| \leq \epsilon \Big] \geq \Pr\Big[ \Big|a_0 + \frac{1}{q}\sum_{i=1}^q  \sum_{j=1}^d a_j \hat{\mu}_i^{(j)} - \frac{1}{|S_k|}\tr\big(p(\tilde{\Delta}_k)\big) \Big| \leq \frac{\epsilon}{2}\Big].
\end{equation}
The random variables $a_j\hat{\mu}_{i}^{(j)}$ are independent and have absolute value in the interval $[0, |a_j|]$, hence applying Lemma~\ref{thm-hoeffding} with $t=\epsilon q /2$ yields
\begin{equation*}
    \Pr\Big[ \Big|a_0 + \frac{1}{q}\sum_{i=1}^q  \sum_{j=1}^d a_j \hat{\mu}_i^{(j)} - \frac{1}{|S_k|}\tr\big(p(\tilde{\Delta}_k)\big) \Big| \geq \epsilon/2 \Big] \leq 2 \exp\Big(\frac{-\epsilon^2 q}{2\|p\|_2^2} \Big).
\end{equation*}
For $q \geq \frac{2\log(2/\eta)}{\epsilon^2}\|p\|_2^2$ the right side of the above is at most $\eta$, and combined with (\ref{eqn:prob_bound}) gives the result.
\end{proof}

Lastly, we give the sample and query complexity of Algorithm~\ref{alg:cap}.

\begin{theorem}\label{thm:complexity_ibm_one}
Let $G$ be a given graph on $n$ vertices, $k$ a desired dimension, $\epsilon$ a desired error, $\eta$ a desired failure probability, and let $\delta$ be the spectral gap of the normalized Laplacian $\tilde{\Delta}_k$. Executing Algorithm \ref{alg:cap} gives us an $(\epsilon, \eta)$-estimator of $\beta_k/|S_k|$ of the clique complex of $G$, with sample complexity 
\begin{equation*}
    S_{\text{\ibmOneName}} =  \log(2/\eta) \times \frac{\log(4/\epsilon)}{\sqrt{\delta}}   \times 2^{\frac{10\log(4/\epsilon)}{\sqrt{\delta}}}  
\end{equation*}
and query complexity 
\begin{equation*}
    Q_{\text{\ibmOneName}} = \log(2/\eta) \times \Big(\frac{1}{\sqrt{\delta}} \log(4/\epsilon)\Big)^2 (\frac{1}{\sqrt{\delta}} \log(4/\epsilon) + 1 \Big) \times 2^{\frac{10 \log(4/\epsilon)}{\sqrt{\delta}} }
\end{equation*}
\end{theorem}

\begin{proof}
Following Algorithm~\ref{alg:cap} there are $d$ estimates $\hat{\mu}^{(i)}$ produced, each of which uses $q$ samples. Therefore the total sample complexity is
\begin{equation*}
   d \times q = \frac{1}{\sqrt{\delta}} \log(4/\epsilon) \times \frac{2 \log(2/\eta)}{\epsilon^2} \times \|p \|_2^2 \leq \frac{1}{\delta} \log(4/\epsilon)^2 \times 2 \log(2/\eta) \times  2^{10d},
\end{equation*}
where we have applied Lemma~\ref{lem-coef_bound} to bound the $2$-norm of the polynomial used in this case. For the query complexity, we count the number of uses of the circuit $U_D$ or its Hermitian conjugate. In each sample applied towards the estimate $\hat{\mu}^{(i)}$ we use $i$ uses of the circuit $U_D$ or its Hermitian conjugate. Therefore the total query complexity is
\begin{equation*}
    \sum_{i=1}^d q \times i = q \times \frac{d(d+1)}{2} \leq \frac{2 \log(2/\eta)}{\epsilon^2} \epsilon^2 d 2^{10d} \times \frac{1}{2} \Big(\frac{1}{\sqrt{\delta}} \log(4/\epsilon)\Big) (\frac{1}{\sqrt{\delta}} \log(4/\epsilon) + 1 \Big)
\end{equation*}
\end{proof}

We note that the term $2^{\frac{10\log(4/\epsilon)}{\sqrt{\delta}}}$ in the complexity of Theorem~\ref{thm:complexity_ibm_one} is not present in the analyses given in previous work.

\section{The classical BNE algorithms}\label{sec:apers}

In this section, we recall the two classical algorithms of Apers, Gribling, Sen and Szabó~\cite{Apers2023} for estimating Betti numbers. In this original work, the authors consider a broader problem than that addressed in this paper. In particular, their algorithm is described for all finite simplicial complexes (not just Vietoris-Rips complexes) and they are able to exploit an upper bound $\hat{\lambda}$ on the eigenvalues of $\Delta_k$. In order to compare these algorithms directly with their quantum counterparts, Algorithm \ref{alg:cap} and \ref{alg:this_paper}, we limit the scope of these algorithms in this section to that of Algorithm \ref{alg:cap}. We also present some small improvements to the design of this algorithms which help to present a fairer comparison in Section \ref{sec:num_exp}.     

\subsection{Outline}

The algorithms presented by Apers et al.~\cite{Apers2023} have a very similar structure to Algorithm \ref{alg:cap}
presented in the last section. In particular, the $k^{\textit{th}}$ normalised Betti number is approximated
by the normalised trace $\frac{1}{|S_k|}\tr(p(M))$ for some polynomial $p$ and relevant matrix $M$, then this 
quantity is estimated by stochastic trace estimation on the relevant powers of $M$. There are two main differences in our presentation of this
algorithm versus the original paper. Firstly, the matrix $M$ taken by Apers et al.\ is the \emph{reflected Laplacian} $H = I - \tilde{\Delta}_k$ instead of 
$\tilde{\Delta}_k$\footnote{In the original presentation, the matrix $H$ is defined with $\tilde{\Delta}_k = \Delta_k/\hat{\lambda}$ where 
$\hat{\lambda}\leq n$ is an upper bound on the largest eigenvalue of $\Delta_k$.}. The eigenvalues of $\tilde{\Delta}_k$ all fall in the range $[0,1]$, 
as noted in Section \ref{sec:top_background}. Thus the eigenvalues of $I-\tilde{\Delta}_k$ are confined to the same range and 
the Hodge theorem \cite{Eckmann1944} implies that the dimension of the $1$-eigenspace of $H$ is equal to the 
$k^{\textit{th}}$ Betti number. As we show, this changes the polynomials $p$ which are needed for these algorithms.
Secondly, the algorithm employs the classical Monte Carlo method of stochastic trace estimation described in Algorithm \ref{alg:markov_chain}. To use this they show that $H$ is sparse via the Laplacian Matrix Theorem~\cite[Theorem 3.3.4]{Goldberg2002}. This theorem can be used to implement the function \textsc{SparseRow}$_M$ in the course of the numerical simulations presented in Section \ref{sec:num_exp}. 

\subsection{\apersOneName{}  algorithm}

The first classical Betti number estimation algorithm of \cite{Apers2023} which we present in modified form as Algorithm \ref{alg:apers_1}, estimates the normalised Betti number in two simple steps. Firstly, it is observed that the desired quantity can be estimated to any accuracy $\epsilon$ by $\tr(H^d)/|S_k|$ for sufficiently high $d$. This is an observation made originally by Friedman~\cite{Friedman1998} and we reprove the exact relationship between $\epsilon$ and $d$ for completeness. 

\begin{lemma}
\label{lem-power_method_exp}
    For any $\epsilon > 0$, if $d \geq \frac{\log(1/\epsilon)}{\delta}$ then the normalised trace of $H^d$ satisfies
\begin{equation}
    \frac{1}{|S_k|}\beta_k \leq \frac{1}{|S_k|} \tr(H^d) \leq \frac{1}{|S_k|} \beta_k + \epsilon.
\end{equation}
\end{lemma}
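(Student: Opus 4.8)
The plan is to diagonalise the reflected Laplacian $H = I - \tilde{\Delta}_k$ and read the trace of $H^d$ directly off its spectrum. Since $\tilde{\Delta}_k$ is positive semi-definite with eigenvalues in $[0,1]$, so is $H$, and the two matrices share eigenvectors with eigenvalues related by $\mu = 1 - \lambda$. By the Hodge theorem the kernel of $\tilde{\Delta}_k$ has dimension $\beta_k$, so the eigenvalue $\lambda = 0$ contributes the value $\mu = 1$ to $H$ with multiplicity exactly $\beta_k$. By the standing assumption that $\delta$ is a lower bound on the smallest \emph{positive} eigenvalue of $\tilde{\Delta}_k$, every other eigenvalue satisfies $\lambda \geq \delta$, and hence the corresponding eigenvalue of $H$ satisfies $0 \leq \mu = 1 - \lambda \leq 1 - \delta$. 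This partitions the spectrum of $H$ into the value $1$ with multiplicity $\beta_k$ and the remaining eigenvalues lying in $[0, 1-\delta]$.

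With this partition I would write
\begin{equation*}
\tr(H^d) = \sum_i \mu_i^d = \beta_k \cdot 1^d + \sum_{\mu \leq 1-\delta} \mu^d = \beta_k + \sum_{\mu \leq 1-\delta} \mu^d,
\end{equation*}
where the final sum ranges over the (at most $|S_k| - \beta_k$) eigenvalues of $H$ lying in $[0, 1-\delta]$, counted with multiplicity. The lower bound is then immediate: each term $\mu^d$ is nonnegative, so $\tr(H^d) \geq \beta_k$, giving $\beta_k/|S_k| \leq \tr(H^d)/|S_k|$.

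For the upper bound I would bound each remaining term by its largest possible value, $\mu^d \leq (1-\delta)^d$, and use that there are at most $|S_k|$ of them to obtain $\tr(H^d) \leq \beta_k + |S_k|(1-\delta)^d$. Dividing through by $|S_k|$ reduces the whole claim to showing $(1-\delta)^d \leq \epsilon$.

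The only analytic input, and the single place where care is needed, is converting the hypothesis $d \geq \log(1/\epsilon)/\delta$ into this decay estimate. I would invoke the elementary inequality $1 - \delta \leq e^{-\delta}$, so that $(1-\delta)^d \leq e^{-\delta d}$, and then $d \geq \log(1/\epsilon)/\delta$ yields $e^{-\delta d} \leq e^{-\log(1/\epsilon)} = \epsilon$. I do not anticipate a genuine obstacle here: the argument is a short spectral computation, and the exponential inequality is precisely what lets the clean threshold $\log(1/\epsilon)/\delta$ stand in for the sharper but less transparent $\log(1/\epsilon)/\bigl(-\log(1-\delta)\bigr)$.
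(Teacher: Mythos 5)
Your proposal is correct and follows essentially the same route as the paper's proof: a spectral decomposition of $\tr(H^d)$ splitting the kernel contribution $\beta_k$ from the positive eigenvalues of $\tilde{\Delta}_k$, bounding each remaining term by $(1-\delta)^d$, and concluding via the decay estimate. The only difference is cosmetic — you work with the eigenvalues $\mu$ of $H$ directly rather than writing the terms as $(1-\lambda/n)^d$, and you spell out the inequality $(1-\delta)^d \leq e^{-\delta d} \leq \epsilon$ that the paper asserts without detail.
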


\begin{proof}
The trace of $H^d$ is evaluated as
\begin{equation}
\tr(H^d) = \beta_k + \sum_{\lambda > 0} (1 - \lambda/n)^d
\end{equation}
where the sum is over the positive eigenvalues $\lambda$ of $\Delta_k$, including multiplicity. Since the eigenvalues of $\Delta_k$ lie in the interval $[0,n]$, then each term $(1 - \lambda/n)^d$ above is nonnegative, and hence $\beta_k \leq \tr(H^d)$. On the other hand, since $\delta$ is a lower bound for the positive eigenvalues of $\tilde{\Delta}_k$ then each term $(1 - \lambda/n)^d$ above is at most $(1-\delta)^d$. Choosing $d \geq \frac{\log(1/\epsilon)}{\delta}$ ensures that $(1-\delta)^d \leq \epsilon$, thus we obtain $\tr(H^d) \leq \beta_k + \epsilon |S_k|$, completing the proof.
\end{proof}

Given this approximation, the algorithm then estimates $\tr(H^d)/|S_k|$ using the classical Markov chain trace estimation described in Algorithm~\ref{alg:markov_chain}. As was shown in \cite{Apers2023}, the number of samples required to obtain an $\epsilon$-estimate of $\tr(H^d)/|S_k|$ with probability $1-\eta$ can then be deduced from Lemma \ref{thm-hoeffding}. We give a short proof for completeness. 

\begin{lemma}
\label{lem:apers1_correctness}
    Let $\epsilon, \eta > 0$ be given. The output of Algorithm~\ref{alg:apers_1} is an $(\epsilon, \eta)$-estimator of $\beta_k/|S_k|$.
\end{lemma}

\begin{proof}
    Let $\hat{\beta}_k = \frac{1}{q} \sum_{\ell=1}^q s_\ell$ denote the output of Algorithm~\ref{alg:apers_1}, where $q = \lceil \frac{2^{2d+1} \log(2/\eta)}{\epsilon^2} \rceil$ and the $s_\ell$ are i.i.d random variables described in Algorithm~\ref{alg:markov_chain}. Similar to the proof of Theorem~\ref{thm-IBM_betti_estimate}, an application of the triangle inequality shows
\begin{equation*}
\begin{split}
    \Big|\hat{\beta}_k - \frac{\beta_k}{|S_k|}\Big| &\leq \Big|\hat{\beta}_k - \frac{1}{|S_k|}\tr\big(H^d\big) \Big| + \frac{\epsilon}{2}, \\
\end{split}
\end{equation*}
and therefore we have the lower bound
\begin{equation*}
    \Pr\Big[ \Big|\hat{\beta}_k - \frac{\beta_k}{|S_k|}\Big| \leq \epsilon \Big] \geq \Pr\Big[ \Big|\hat{\beta}_k - \frac{1}{|S_k|}\tr\big(H^d\big)\Big| \leq \frac{\epsilon}{2}\Big].
\end{equation*}
The random variables $s_\ell$ are bounded as $0 \leq s_\ell \leq \|H\|_1^d$ as described in Section~\ref{sec:trace_estimation}. Therefore applying Lemma~\ref{thm-hoeffding} with $t = \epsilon q/2$ gives
\begin{equation}
    \Pr\Big[ \Big|\hat{\beta}_k - \frac{1}{|S_k|}\tr\big(H^d\big)\Big| \geq \frac{\epsilon}{2}\Big] \leq 2 \exp\Big( \frac{-\epsilon^2 q}{2\|H\|_1^{2d}} \Big)
\end{equation}
To make the right side at most $\eta$ it suffices to take $q \geq 2\ln(2/\eta)\|H\|_1^{2d}/\epsilon^2$. As was pointed out in \cite[Corollary 4.4]{Apers2023}, for a clique complex the $1$-norm of the matrix $H$ is at most $2$, therefore choosing $q=\lceil \frac{2^{2d+1} \log(2/\eta)}{\epsilon^2} \rceil$ we obtain
\begin{equation*}
    \Pr\Big[ \Big|\hat{\beta}_k - \frac{\beta_k}{|S_k|}\Big| \leq \epsilon \Big] \geq 1 - \eta
\end{equation*}
as was desired.
\end{proof}

Having shown that the output of Algorithm~\ref{alg:apers_1} is an $(\epsilon, \eta)$-estimator, we can now give the sample and query complexity.

\begin{theorem}\label{thm:complexity_apers1}

Let $G$ be a given graph on $n$ vertices, $k$ a desired dimension, $\epsilon$ a desired error, $\eta$ a desired failure probability, and let $\delta$ be the spectral gap of the normalised Laplacian $\tilde{\Delta}_k$. Executing Algorithm \ref{alg:apers_1} gives an $(\epsilon, \eta)$-estimator of $\beta_k/|S_k|$ of the clique complex of $G$, and with sample complexity
\begin{equation*}
    S_{\text{\apersOneName}} = \frac{1}{\epsilon^2} \times \log(2/\eta) \times 2^{\frac{2\log(2/\epsilon)}{\delta} + 1}
\end{equation*}
and query complexity 
\begin{equation*}
    Q_{\text{\apersOneName}} = \frac{1}{\epsilon^2} \times  \frac{\log(2/\epsilon)}{\delta} \times  \log(2/\eta) \times 2^{\frac{2\log(2/\epsilon)}{\delta} + 1}
\end{equation*}
\end{theorem}

\begin{proof}
With $d = \lceil \frac{\log(2/\epsilon)}{\delta} \rceil$, Algorithm~\ref{alg:apers_1} consists of running Algorithm \ref{alg:markov_chain} to estimate the normalised trace of $H^d$ to accuracy $\epsilon/2$ and with probability $1-\eta$. In Lemma~\ref{lem:apers1_correctness} we showed that this can be done with $\lceil \frac{2^{2d+1} \log(2/\eta)}{\epsilon^2} \rceil$ samples from the random variable in Algorithm \ref{alg:markov_chain}. This gives the claimed sample complexity. For the query complexity, each of these samples required $d$ calls to the \textsc{SparseRow}$_{I-\tilde{\Delta}_k}$ to simulate $d$ steps of the relevant Markov chain. This means that the query complexity is $d$ times the sample complexity, which is precisely the claimed query complexity. 
\end{proof}

\begin{algorithm}[t]
\caption{\apersOneName \cite[Algorithm 1]{Apers2023}}\label{alg:apers_1}
\begin{algorithmic}[1]
\Require {The input is per that of Algorithm \ref{alg:cap}}
\Ensure {The output is per that of Algorithm \ref{alg:cap}}
\Let $\normalLap, R_k$ and $\delta$ be defined as per Algorithm \ref{alg:cap}. Let \textsc{S}$_{H}$ be the function \textsc{SparseRow}$_{I-\tilde{\Delta}_k}$ as described in Section \ref{sec:trace_estimation}.
\Procedure{\apersOneName}{$\mathcal{G}, k, \epsilon$}
\State $d \gets \lceil \frac{\log(2/\epsilon)}{\delta} \rceil $ \Comment{See Lemma \ref{lem-power_method_exp}.} 
\State $q \gets \lceil \frac{2^{2d+1} \log(2/\eta)}{\epsilon^2} \rceil$ \Comment{See Lemma~\ref{lem:apers1_correctness}.}
\State \Return \Call{EstimateSparseTrace}{$S_H$, q, d, $R_{k}$}
\EndProcedure
\end{algorithmic}
\end{algorithm}

\subsection{\apersTwoName\ algorithm}

As we saw in Theorem~\ref{thm:complexity_apers1}, Algorithm \ref{alg:apers_1} has an exponential asymptotic dependence on the degree $d =\frac{1}{\delta} \log(2/\epsilon)$. In \cite{Apers2023}, a second algorithm was proposed which reduces this exponential term to an exponential of $\mathcal{O}\big(\frac{1}{\sqrt{\delta}}\log(1/\epsilon)\big)$ by choosing a polynomial approximation with a lower degree. This second algorithm uses a well-known approximation of the monomial $x^r$ by a sum of Chebyshev polynomials of degrees $1,2, \ldots d$. (For more details, see the exposition of \cite[Theorem 3.2]{sachdeva2013approximation}.) This polynomial is written as $p_{r,d}(x)$ and the important consequence of the theorem cited above is that choosing $d \in \tilde{\mathcal{O}}(\sqrt{r})$ is sufficient to guarantee any constant uniform approximation of $x^r$ in the range $[-1,1]$. This approximation is then used in the same way as the polynomial approximation in Algorithm \ref{alg:cap} in that each trace $\tr(H)/|S_k|, \tr(H^2)/|S_k|, \ldots, \tr(H^d)/|S_k|$ is approximated using stochastic trace estimation as per Algorithm \ref{alg:markov_chain}, and then summed to get an estimate of $\beta_k/|S_k|$. 

We make two observations which improve the analysis of this algorithm. The first is that in general a lower degree polynomial can be used in the approximation compared with $p_{r,d}$. 

\begin{lemma}
\label{lem:cheb_poly_apers2}
    Let $\epsilon > 0$, and let $\delta$ be the smallest nonzero eigenvalue of $\tilde{\Delta}_k$. For $d \geq \log(4/\epsilon)/\sqrt{\delta}$, the polynomial 
\begin{equation*}
    p(x) = T_{d}\Big(\frac{x}{1-\delta}\Big)/T_d\Big(\frac{1}{1-\delta}\Big)
\end{equation*}
satisfies 
\begin{equation*}
    \Big|\frac{1}{|S_k|}\tr\big(p(H)\big)-\frac{\beta_k}{|S_k|} \Big| < \epsilon/2.
\end{equation*}
\end{lemma}

\begin{proof}
    This follows immediately from Lemma~\ref{lem:filter_poly} and \ref{lem-chebypoly} by noticing that the polynomial considered here is a reflection of the polynomial considered there.
\end{proof}

\begin{remark}
    In the original version of \apersTwoName{}\cite{Apers2023} the polynomial $p_{r, d}$ with $r = \lceil \log(3/\epsilon)/\delta \rceil$ and $d =  \lceil\sqrt{2/\delta}\log(6/\epsilon)\rceil$ has degree $d$ and achieves the approximation
\begin{equation*}
    \Big|\frac{1}{|S_k|} \tr\big(p_{r,d}(H)\big)-\frac{\beta_k}{|S_k|} \Big| < 2\epsilon/3.
\end{equation*}
The polynomial $p(x)$ described in Lemma~\ref{lem:cheb_poly_apers2} has lower degree $d = \lceil\log(4/\epsilon)/\sqrt{\delta}\rceil$ and achieves the better approximation
\begin{equation*}
    \Big|\frac{1}{|S_k|}\tr(p(H)) - \frac{\beta_k}{|S_k|} \Big| < \epsilon/2.
\end{equation*}
\end{remark}

\begin{algorithm}[t]
\caption{\apersTwoName \cite[Algorithm 2]{Apers2023}}\label{alg:apers_2}
\begin{algorithmic}[1]
\Require {The input is per that of Algorithm \ref{alg:cap}}
\Ensure {The output is per that of Algorithm \ref{alg:cap}}
\Let $\normalLap, R_k$ and $\delta$ be defined as per Algorithm \ref{alg:cap}, and let \textsc{S}$_{H}$ be the function \textsc{SparseRow}$_{I-\tilde{\Delta}_k}$ as described in Section \ref{sec:trace_estimation}.
\Procedure{\apersTwoName}{$\mathcal{G}, k, \epsilon$}

\State Define polynomial $p(x)=\sum_{i=0}^d a_i x^i$ s.t. $ \big|\tr(p(I - \tilde{\Delta}_k))/|S_k|- \beta_k/|S_k|\big| < \epsilon/2$
\Comment{See Lemma~\ref{lem:cheb_poly_apers2}}
\State $\eta' \gets 1- \sqrt[\lceil d/2 \rceil]{1 - \eta} $ 
\For{$i = 1, \ldots, d$ \textnormal{where} $a_i \neq 0$}
\State $\epsilon_i \gets \frac{\epsilon}{2 \lceil d/2 \rceil |a_i|}$ 
\State $q_i \gets \lceil \frac{2^{2d+1} \log(2/\eta')}{\epsilon_i^2} \rceil$ 
\State $\hat{\mu}^{(i)} \gets $\Call{EstimateSparseTrace}{\textsc{S}$_{H}$, $q_i$, $i$, $R_{k}$} 
\EndFor
\State \Return $a_0 + a_1\hat{\mu}^{(1)} + \ldots + a_d \hat{\mu}^{(d)}$
\EndProcedure
\end{algorithmic}
\end{algorithm}

Similar to Lemma~\ref{lem-coef_bound} we give an upper bound on the 2-norm of the polynomial defined in Lemma~\ref{lem:cheb_poly_apers2} which will later be used to bound the complexity of Algorithm~\ref{alg:apers_2}. The proof can be found in the Appendix~\ref{app:L5.5}.

\begin{lemma}
\label{lem-coef_bound2}
Let $\epsilon > 0$ and $\delta \in (0, 1/2]$ be given, and let $p_d(x)$ denote the polynomial 
\begin{equation*}
    p_d(x) = T_{d}\Big(\frac{x}{1-\delta}\Big)/T_d\Big(\frac{1}{1-\delta}\Big)
\end{equation*}
For $d \geq \frac{1}{\sqrt{\delta}} \log(4/\epsilon)$, the polynomial $p_d(x)$ has $2$-norm which satisfies
\begin{equation*}
  \|p_d\|_2 \leq \epsilon2^{3d-1}.
\end{equation*}
\end{lemma}

In Apers et al.'s original algorithm, the samples for each trace estimation of $\tr(H^i)/|S_k|$ are divided up differently from how this process is done in Algorithm \ref{alg:cap}. In their version, they perform a separate trace estimation for each monomial in $p_{r,d}$ with a separate error $\epsilon_i$ for each such that $\sum \epsilon_i < \epsilon/2$.  Our second observation gives more precise values of these errors and chooses shot counts to ensure that the final estimate is $\epsilon$-close to the $k^{\textit{th}}$ normalised Betti number with confidence $\eta$. We also note that we only need to perform trace estimations for the non-zero monomials of the polynomial $p$. The polynomial  described in Lemma~\ref{lem:cheb_poly_apers2} is either even or odd due to the well-known fact that the Chebyshev polynomials are either even or odd, depending on the parity of its degree. Therefore when applying Algorithm~\ref{alg:apers_2} there are only $\Bar{d} = \lceil d/2 \rceil$ of these traces $\tr(H^i)/|S_k|$ needed to estimate.

\begin{lemma}
\label{lem:apers2_correctness}
    Let $\epsilon, \eta > 0$ be given. The output of Algorithm~\ref{alg:apers_2} is an $(\epsilon, \eta)$-estimator of $\beta_k/|S_k|$.
\end{lemma}

\begin{proof}
    Let $\hat{\beta}_k = a_0 + a_1\hat{\mu}^{(1)} + \ldots + a_d \hat{\mu}^{(d)}$ denote the output of Algorithm~\ref{alg:apers_2}. Similar to the proof of Theorem~\ref{thm-IBM_betti_estimate}, an application of the triangle inequality shows
\begin{equation*}
\begin{split}
    \Big|\hat{\beta}_k - \frac{\beta_k}{|S_k|}\Big| &\leq \Big|\hat{\beta}_k - \frac{1}{|S_k|}\tr\big(p(H)\big) \Big| + \frac{\epsilon}{2}, \\
\end{split}
\end{equation*}
and therefore it remains to show
\begin{equation*}
\begin{split}
    \Pr\Big[ \Big|\sum_{i=1}^d |a_i|\Big( \hat{\mu}^{(i)} - \frac{1}{|S_k|}\tr\big(H^i\big)\Big)\Big| \leq \frac{\epsilon}{2}\Big] \geq 1 - \eta.
\end{split}
\end{equation*}
This probability can be bounded below by the probability that every estimate $\hat{\mu}^{(i)}$ is simultaneously close to its expected value $\frac{1}{|S_k|}\tr\big(H^i\big)$ with high enough probability. More precisely, we have the lower bound
\begin{equation*}
\begin{split}
    \Pr\Big[ \Big|\sum_{i=1}^d |a_i|\Big( \hat{\mu}^{(i)} - \frac{1}{|S_k|}\tr\big(H^i\big)\Big)\Big| \leq \frac{\epsilon}{2}\Big] \geq \prod_{i=0}^d \Pr\Big[ \Big| \hat{\mu}^{(i)} - \frac{1}{|S_k|}\tr\big(H^i\big)\Big| \leq \frac{\epsilon}{2\lceil d/2 \rceil |a_i|}\Big].
\end{split}
\end{equation*}
Each estimate $\hat{\mu}^{(i)}$ is an average of $q_i$ samples which lie between $0$ and $\|H\|_1^{2d} \leq 2^{2d}$, hence applying the Hoeffding inequality and the choice of $q_i= \lceil \frac{2^{2d+1} \log(2/\eta')}{\epsilon_i^2} \rceil$ we obtain
\begin{equation*}
    \Pr\Big[ \Big| \hat{\mu}^{(i)} - \frac{1}{|S_k|}\tr\big(H^i\big)\Big| \geq \frac{\epsilon}{2 \lceil d/2 \rceil|a_i|}\Big] \leq 2\exp\Big(\frac{-\epsilon^2 q_i}{\lceil d/2 \rceil^2|a_i|^2 2^{2d+1}} \Big) \leq \eta^\prime
\end{equation*}
for each $i=1,...,d$. Together we have therefore shown 
\begin{equation*}
\begin{split}
    \Pr\Big[ \Big|\sum_{i=1}^d |a_i|\Big( \hat{\mu}^{(i)} - \frac{1}{|S_k|}\tr\big(H^i\big)\Big)\Big| \leq \frac{\epsilon}{2}\Big] \geq \Big(1 - \eta^\prime \Big)^{\lceil d/2 \rceil} \geq 1- \eta.
\end{split}
\end{equation*}
\end{proof}

\begin{theorem}
\label{thm:complexity_apers}
Let $G$ be a given graph on $n$ vertices, $k$ a desired dimension, $\epsilon$ a desired error, $\eta$ a desired failure probability, and let $\delta$ be the spectral gap of the normalised Laplacian $\tilde{\Delta}_k$. Executing Algorithm \ref{alg:apers_2} gives an $(\epsilon, \eta)$-estimator of $\beta_k/|S_k|$ of the clique complex of $G$, with sample complexity
\begin{equation*}
    S_{\text{\apersTwoName}} =  \Big(\frac{1}{\sqrt{\delta}}\log(4/\epsilon) \Big)^3 2^{8\frac{1}{\sqrt{\delta}}\log(4/\epsilon)-1} \log(2/\eta)
\end{equation*}
and query complexity 
\begin{equation*}
    Q_{\text{\apersTwoName}} = \Big(\frac{1}{\sqrt{\delta}}\log(4/\epsilon) \Big)^4 2^{8\frac{1}{\sqrt{\delta}}\log(4/\epsilon)-1} \log(2/\eta)
\end{equation*}
\end{theorem}

\begin{proof}
    The sample complexity of Algorithm~\ref{alg:apers_2} is $\sum_{i=1}^d q_i$. Applying the choice of $q_i = \lceil \frac{2^{2d+1} \log(2/\eta')}{\epsilon_i^2} \rceil$ and $\epsilon_i = \frac{\epsilon}{2 \lceil d/2 \rceil |a_i|}$ this is simplified as
\begin{equation*}
\begin{split}
    S_{\apersTwoName} &= \sum_{i=1}^d \frac{2^{2d+1} \log(2/\eta^\prime)}{\epsilon_i^2} \\
    &= 2^{2d+1} \log(2/\eta^\prime) \sum_{i=1}^d  \frac{4 \lceil d/2 \rceil^2 |a_i|^2}{\epsilon^2} \\
    &\leq \frac{1}{\epsilon^2} d^2 2^{2d+1} \log(2/\eta^\prime) \|p\|_2^2 \\
\end{split}
\end{equation*}
We then apply the bound on $2$-norm given in Lemma~\ref{lem-coef_bound2}, so that the above is upper bounded as
\begin{equation*}
    S_{\apersTwoName} \leq d^2 2^{8d-1} \log(2/\eta^\prime)
\end{equation*}
Lastly, note that that $\eta'$ is bounded below by $\eta^{\lceil d/2 \rceil}$. This can be seen by using the Taylor expansion of $\eta'$ around $\eta = 0$ and noticing that $\eta' > \eta/\lceil d/2 \rceil$ which is bounded below by $\eta^{\lceil d/2 \rceil}$ for any $\eta<0.5$ and $\lceil d/2 \rceil>2$. This shows the upper bound
\begin{equation*}
    S_{\apersTwoName} \leq d^3 2^{8d-1} \log(4/\eta).
\end{equation*}
Substituting $d = \frac{1}{\sqrt{\delta}}\log(4/\epsilon)$ we get the claimed sample complexity. For the query complexity, for each $1\leq i \leq d$, each of the $q_i$ samples used to estimate $\tr(H^i)/|S_k|$ make $i \leq d$ calls to the matrix $H$. Therefore the total query complexity is bounded as 
\begin{equation*}
    Q_{\apersTwoName} \leq d \times S_{\apersTwoName} \leq d^4 2^{8d-1} \log(4/\eta),
\end{equation*}
completing the proof.
\end{proof}

\section{The QBNE-Power algorithm}\label{sec:power_method}
We now present a new quantum algorithm for Betti number estimation which requires $\mathcal{O}(1/\epsilon^2)$ samples from short-depth quantum circuits. Assessing advantage for this algorithm requires consideration of the variable convergence rate of Algorithm \ref{alg:apers_1}. We compare these algorithms empirically in Section \ref{sec:num_exp}.

As shown in Theorems \ref{thm:complexity_ibm_one} and \ref{thm:complexity_apers}, the number of samples required to estimate the $k^{\textit{th}}$ normalised Betti number of the input graph grows at least exponentially in the term $\log(1/\epsilon)/\sqrt{\delta}$. In this section, we describe a new alternative quantum algorithm for this problem which exponentially improves the Monte Carlo algorithm of Akhalwaya et al.\ studied in Section \ref{sec:nisq-tda} which, as we show in Theorem \ref{thm:complexity_this_paper}, has a sample count that is polynomial in $n, 1/\epsilon$ and $1/\delta$.

\subsection{Outline}
\label{sec:power_method_description}

In this section we propose a new quantum algorithm for Betti number estimation which can be viewed as a quantum analogue of Algorithm \ref{alg:apers_1} in Section~\ref{sec:apers}. The algorithm relies on first modifying the circuit construction of Akhalwaya et al.\ to work for the \emph{reflected} Laplacian $I-\tilde{\Delta}_k$, showing that we can write $I-\tilde{\Delta}_k = D^{\dagger}D$ and giving a block-encoding $U_{D}$. Following the notation of Apers et al., we refer to this matrix as $H$.  Then we use the stochastic trace estimation technique described in Algorithm \ref{alg:block_encoded_trace} to estimate $\tr(H^d)$ which approximates the $k^{\textit{th}}$ normalised Betti number for a sufficiently high $d$. This method is summarised in Algorithm \ref{alg:this_paper}. 

\begin{algorithm}[t]
\caption{A New Quantum Algorithm for Betti Number Estimation}\label{alg:this_paper}
\begin{algorithmic}[1]
\Require {The input is per that of Algorithm \ref{alg:cap}}
\Ensure {The output is per that of Algorithm \ref{alg:cap}}
\Let $\normalLap, R_k$ and $\delta$ be defined as per Algorithm \ref{alg:cap}.
\Procedure{\thisPaperName}{$\mathcal{G}, k, \epsilon$}
\State $d \gets \lceil \frac{\log(2/\epsilon)}{\delta} \rceil $ 
\Comment{Apply Lemma \ref{lem-power_method_exp} for $\epsilon/2$.}
\State $q \gets \lceil \frac{2\log(2/\eta)}{\epsilon^2} \rceil $ \Comment{See Theorem \ref{bettihat_algorithm}.}

\State Let $U_D$ be the block encoding of a matrix $D$ such that $D^\dagger D = I - \normalLap$
\State \Return $\Call{EstimateFromBlockEncoding}{U_D, d, q, R_k}$ 
\EndProcedure
\end{algorithmic}
\end{algorithm}

\subsection{Quantum circuits for the reflected Laplacian}
\label{sec:H_blockencoding}
In Section \ref{sec:block_encoding}, we recalled the quantum circuits designed by Akhalwaya et al.\ for constructing trace estimates of powers of the normalised Laplacian matrix $\tilde{\Delta}_k$. Central to this,(\ref{eqn:decomp}) gives a modular decomposition of the Laplacian into components which could be implemented in quantum circuits as unitaries or block-encodings. However, taking a circuit implementing some unitary $U$ and trying to design a circuit implementing $I-U$ is not even possible in general. Fortunately, the Laplacian has structure that allows us to give a decomposition of the reflected normalised Laplacian, $I-\tilde{\Delta}_k$ in (\ref{eqn:H-decomp}), which differs by just one component to that in (\ref{eqn:decomp}):
\begin{equation}\label{eqn:H-decomp}
   H = P_kP_\Gamma \Big(\frac{1}{\sqrt{n}}B\Big) (I-P_\Gamma) \Big(\frac{1}{\sqrt{n}}B \Big) P_{\Gamma}P_k.
\end{equation}
By this construction, $H$ can be expressed as $D^{\dagger}D$ where 
$$D = (I-P_\Gamma) \Big(\frac{1}{\sqrt{n}}B \Big) P_{\Gamma}P_k.$$
To show how to compute the block-encoding $U_D$ required to apply Algorithm \ref{alg:block_encoded_trace} to $H$ it remains to show how to block encode the projection $I-P_{\Gamma}$. Here, we show that this can be done.

This circuit, $U_{I-P_\Gamma}$, is shown in Figure \ref{fig:fix_circuit}. It works by first applying a circuit $U_{P_\Gamma}$ which block-encodes the projection $P_\Gamma$, then applying a multi-controlled $X$ controlled on the $0$ outcome of every one of the auxiliary qubits of $U_{P_\Gamma}$. Finally, we apply $U_{P_\Gamma}^{\dagger}$ to uncompute the auxiliary qubits. See \cite{akhalwaya2024topological} for a circuit construction of $U_{P_\Gamma}$ using $\mathcal{O}(n^2)$ auxiliary qubits.

\begin{theorem}
    Given a circuit $U_{P_\Gamma}$ which block-encodes the projection $P_\Gamma$, the circuit in Figure~\ref{fig:fix_circuit} acts as a block-encoding of the projection $I - P_\Gamma$. 
\end{theorem}

\begin{proof}
    Note that the projection $I - P_\Gamma$ sends a computational basis state to itself if the corresponding simplex is not in $\Gamma$, and $0$ otherwise. We will show that the circuit maps a given computational basis state $\ket{x_1,...,x_n}\ket{0^m}\ket{0}$ to itself when the set corresponding to $\ket{x_1,...,x_n}$ is not in $\Gamma$, and otherwise maps to $\ket{x_1,...,x_n}\ket{\psi}\ket{0}$ for some state $\ket{\psi}$ orthogonal to $\ket{0^m}\ket{0}$. If the set corresponding to $\ket{x_1,...,x_n}$ is not in $\Gamma$, then the circuit $U_{P_\Gamma}$ acts on $\ket{x_1,...,x_n}\ket{0^m}\ket{0}$ as the identity, after which the following two operations maps it as
\begin{equation*}
\begin{split}
    \ket{x_1,...,x_n}\ket{0^m}\ket{0} &\mapsto \ket{x_1,...,x_n}\ket{0^m}\ket{1} \\
    &\mapsto \ket{x_1,...,x_n}\ket{0^m}\ket{0} \\
\end{split}
\end{equation*}
    The remaining $U_{P_\Gamma}^\dagger$ acts trivially on this output state since the set corresponding to $\ket{x_1,...,x_n}$ is not in $\Gamma$. This shows that the circuit has the correct action when the set corresponding to $\ket{x_1,...,x_n}$ is not in $\Gamma$. For case when this set is in $\Gamma$, the circuit $U_{P_\Gamma}$ maps $\ket{x_1,...,x_n}\ket{0^m}\ket{0}$ to $\ket{x_1,...,x_n}\ket{\psi}\ket{0}$ for some state $\ket{\psi}$ orthogonal to $\ket{0^m}$. The multi-controlled $X$ operation then acts trivially, and the $X$ gate maps the state to $\ket{x_1,...,x_n}\ket{\psi}\ket{1}$. After applying the circuit $U_{P_\Gamma}^\dagger$ the output is $\ket{x_1,...,x_n}\ket{0^m}\ket{1}$, where $\ket{0^m}\ket{1}$ is indeed orthogonal to $\ket{0^m}\ket{1}$. This completes the proof.
\end{proof}

\begin{figure}
\centering
    \begin{tikzpicture}
    \begin{yquant}
    qubit {$\ket{0}$} target;
    qubit {$\ket{0}$} a[1];
    qubit {$\vdots$} a[+1];
    qubit {$\ket{0}$} a[+1];
    discard a[1];
    qubit {$\ket{v_{\protect\the\numexpr\idx+1}}$} q[4]; 
    box {$U_{P_\Gamma}$} (q, a);
    barrier target, a, q;
    cnot  target | ~ a;
    box {$X$} target;
    barrier target, a, q;
    box {$U_{P_\Gamma}^{\dagger}$} (q, a);
    \end{yquant}
    \end{tikzpicture}
\caption{The circuit $U_{I-P_\Gamma}$ which block-encodes the projection $I-P_{\Gamma}$ given a circuit $U_{P_\Gamma}$ that block-encodes the projection $P_\Gamma$. The multi-controlled $X$ operation is controlled on the $0$ of each of the auxiliary qubits of $U_{P_\Gamma}$. See \cite{akhalwaya2024topological} for a circuit implementation of $U_{P_\Gamma}$ using $\mathcal{O}(n^2)$ auxiliary qubits.}
\label{fig:fix_circuit}
\end{figure}

\subsection{Complexity analysis}

In this section, we prove the correctness of Algorithm \ref{alg:this_paper} by verifying the number of samples $q$ given. First, in Theorem \ref{bettihat_algorithm} we establish the correctness of the estimator created by Algorithm \ref{alg:this_paper}. Then we show in Theorem \ref{thm:complexity_this_paper} that both the sample complexity and query complexity grow only polynomially in $n, 1/\epsilon$, and $1/\delta$. This represents a large asymptotic improvement over the behaviour of the previously presented classical and quantum algorithms.    

The algorithm presented in Algorithm~\ref{alg:this_paper} produces an estimate for $\beta_k/|S_k|$ by estimating the $\tr(H^d)/|S_k|$ using Algorithm \ref{alg:block_encoded_trace}. We recall that the algorithm generates samples which are either $0$ or $1$ and has expectation $\tr(H^d)/|S_k|$. To establish the correctness of Algorithm \ref{alg:this_paper}, we prove that the number of samples that we pass to this trace estimation subroutine is sufficient. That is the purpose of the next result.

\begin{theorem}
\label{bettihat_algorithm}
Let $\epsilon, \eta>0$. For all $q \geq 2\log(2/\eta)/\epsilon^2$ and $d \geq \frac{\log(2/\epsilon)}{\delta}$ the Betti number estimator $\hat{\beta}_k$ output by Algorithm \ref{alg:this_paper} is an $(\epsilon, \eta)$-estimator.
\end{theorem}

\begin{proof}
The proof is similar to that of Theorem~\ref{thm-IBM_betti_estimate} and so we sketch the main ideas. Similar to Theorem~\ref{thm-IBM_betti_estimate}, the normalised Betti number estimate provided by Algorithm~\ref{alg:this_paper} can be expressed as an average of random variables taking value $0$ or $1$, say $\hat{\beta}_k = \frac{1}{q}\sum_{i=1}^q \hat{\mu}_i$. Applying a similar triangle inequality along with Lemma~\ref{lem-power_method_exp} we obtain the lower bound
\begin{equation*}
    \Pr\Big[ \Big|\hat{\beta}_k - \frac{\beta_k}{|S_k|}\Big| \leq \epsilon \Big] \geq \Pr\Big[ \Big|\frac{1}{q}\sum_{i=1}^q \hat{\mu}_i - \frac{1}{|S_k|} \tr(H^d) \Big| \leq \epsilon / 2 \Big].
\end{equation*}
Applying the Hoeffding inequality of Lemma~\ref{thm-hoeffding}, we obtain
\begin{equation*}
   \Pr\Big[ \Big|\frac{1}{q}\sum_{i=1}^q \hat{\mu}_i - \frac{1}{|S_k|} \tr(H^d) \Big| \geq \epsilon / 2 \Big] \leq 2\exp(-\epsilon^2 q/2),
\end{equation*}
and therefore the choice of $q$ gives the result.
\end{proof}

We can now summarise the quantum algorithm for Betti number approximation using the power method and provide its time complexity. 

\begin{theorem}\label{thm:complexity_this_paper}
Let $G$ be a given graph on $n$ vertices, $k$ a desired dimension, $\epsilon$ a desired error, $\eta$ a desired failure probability, and let $\delta$ be the spectral gap of the normalized Laplacian $\tilde{\Delta}_k$. Executing Algorithm \ref{alg:this_paper} gives us an $(\epsilon, \eta)$-estimator of $\beta_k/|S_k|$ of the clique complex of $G$, with sample complexity 
\begin{equation*}
    S_{\text{\thisPaperName}} = \frac{2\log(2/\eta)}{\epsilon^2} 
\end{equation*}
and query complexity
\begin{equation*}
    Q_{\text{\thisPaperName}} = \frac{\log(2/\epsilon)}{\delta} \times \frac{2\log(2/\eta)}{\epsilon^2}
\end{equation*}
\end{theorem}

\begin{proof}
Following Theorem~\ref{bettihat_algorithm}, the total number of samples used is $\frac{2\log(2/\eta)}{\epsilon^2}$, which gives the claimed sample complexity. For the query complexity, each sample produced by Algorithm \ref{alg:this_paper} requires running a circuit with at most $d = \frac{1}{\delta} \log(2/\epsilon)$ calls to the the circuit $U_D$ or $U_{D^{\dagger}}$. Thus the total number of uses of this circuit is given as 
\[  \frac{\log(2/\epsilon)}{\delta} \times \frac{2\log(2/\eta)}{\epsilon^2} ,\]
as required.
\end{proof}

\section{Numerical experiments}\label{sec:num_exp}

As summarized in Table \ref{tab:comparison}, we have analysed four Monte Carlo algorithms for normalised Betti number estimation, deriving upper bounds for the required number of samples to achieve an $(\epsilon, \eta)$-estimate for a given error $\epsilon$ and confidence $\eta$. In this section, we set out to empirically verify our analysis by implementing all four algorithms and confirming that the output of the algorithms converge to the known ground-truth values within a required precision, $\epsilon$, using a number of samples less than or equal to the conservative upper bounds. We also set out to observe empirical performance differences between the four algorithms.

The number of samples needed by the four algorithms to produce the estimate $\hat{\beta}_k$ for a user-selected order $k$ naturally depends on the two user-provided `output-quality' parameters $\epsilon$ and $\eta$, which we choose as $\epsilon=\eta=0.1$ throughout. More opaquely, the sample counts depend on subtle properties of the user-provided graph, most importantly, the spectral gap $\delta$ of the normalised combinatorial Laplacian. 
The number of vertices, $n$, indirectly features in the upper bound on the number of samples through the spectral gap, where as $n$ increases the gap may decrease e.g. $\delta \in O(1/\textrm{poly}(n))$ and indeed does for our chosen class of benchmark graphs. The number of vertices also features in the computational time needed to generate one sample.


\subsection{Selected benchmarks: complete ($k+1$)-partite graphs}

We have selected to run the algorithms on four clique complexes which have explicit expressions for both their normalised Betti numbers and spectral gaps of their normalised Laplacian, and which have been previously studied in \cite{Berry2024}.\footnote{These graphs in graphml format can be found at \url{https://github.com/quantinuum-dev/CBNE/tree/main/graphs}} For this reason we consider the complete $(k+1)$-partite graph, which is the graph having $k+1$ clusters with $m$ vertices in each cluster, such that any two distinct vertices are adjacent if they are in different clusters. The clique complex of this graph has normalised Betti number $\beta_k/|S_k| = (m-1)^{k+1}/m^k$, and the spectral gap of $\tilde{\Delta}_k$ is $\delta = \frac{1}{k+1}$ \cite[Proposition 1,2]{Berry2024}. We collect the exactly calculated instances of these properties for the graphs under study in Table \ref{tab:graph_properties}. These graphs are useful benchmarks because they have a small number of vertices and yet their sample complexities as described in Table~\ref{tab:comparison} are large. Additionally, the graphs that induce these clique complexes are the smallest non-trivial examples of the class of graphs discussed in \cite{Berry2024} that induce exponentially large Betti numbers. In Table \ref{tab:step_count_comparison}, we list the degree $d$ of the respective polynomial used, along with the sample and query counts for these cases, which are computed by the formulae referenced in Table~\ref{tab:comparison}.

\begin{table}[!tb]
    \centering
    \resizebox{\textwidth}{!}{%
    \setlength{\tabcolsep}{10pt}
    \renewcommand{\arraystretch}{1.2}
    \begin{tabular}{c >{\Centering\arraybackslash}m{2cm}  >{\Centering\arraybackslash}m{2cm} >{\Centering\arraybackslash}m{2cm} >{\Centering\arraybackslash}m{2cm}
    >{\Centering\arraybackslash}m{2cm}}
    \cline{2-6}
          & Dimension $k$ & Number of Vertices, $n$ {\small $= (k+1)\times m$}& Spectral gap $\delta$ &  Normalised Betti number $\beta_k / |S_k|$ & Layout\\\hline
    \multicolumn{1}{c|}{Graph-1} & 1 & $6=2\times 3$ & 0.500  & 0.444 & \vspace{0.5em} \includegraphics[width=0.1\textwidth, height=15mm]{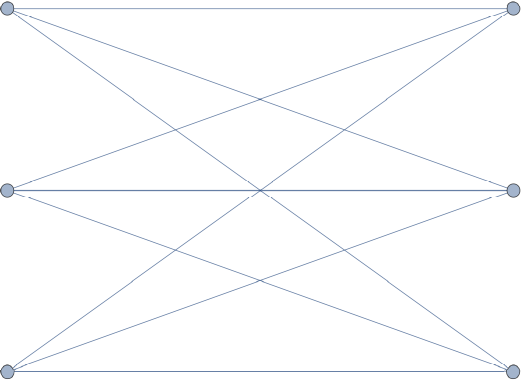} \\ \hline
    \multicolumn{1}{c|}{Graph-2} & 1 & $8=2\times 4$ & 0.500  & 0.562 & \vspace{0.5em} \includegraphics[width=0.1\textwidth, height=15mm]{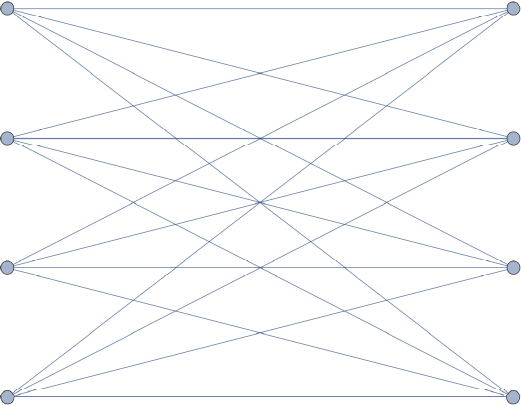}\\ \hline
    \multicolumn{1}{c|}{Graph-3} & 2 & $9=3\times 3$ & 0.333  & 0.296 & \vspace{0.5em} \includegraphics[width=0.1\textwidth, height=15mm]{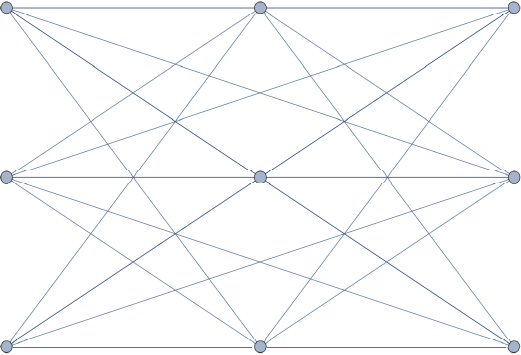} \\ \hline
    \multicolumn{1}{c|}{Graph-4} & 2 & $12=3\times 4$ & 0.333 & 0.421 &  \vspace{0.5em} \includegraphics[width=0.1\textwidth, height=15mm]{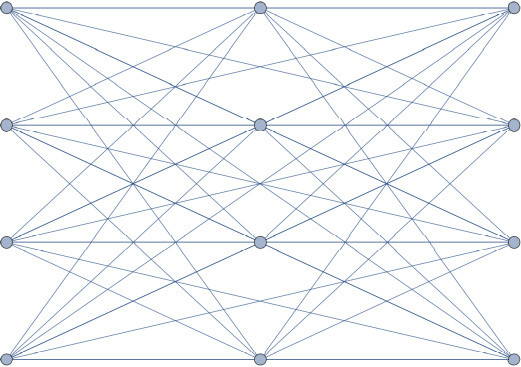} \\ \hline
    \end{tabular}}
    \caption{Properties of the four $k+1$-partite graphs with $m$ clusters used to compare the algorithms.}
    \label{tab:graph_properties}
\end{table}


\begin{table}[!tb]
    \centering
    \setlength{\tabcolsep}{10pt}
    \renewcommand{\arraystretch}{1.2}
    \resizebox{\textwidth}{!}{%
    \begin{tabular}{c | c 
    >{\Centering\arraybackslash}m{1.5cm}
    >{\Centering\arraybackslash}m{1.5cm} | c  
    >{\Centering\arraybackslash}m{1.5cm}
    >{\Centering\arraybackslash}m{1.5cm} | c 
    >{\Centering\arraybackslash}m{1.5cm}
    >{\Centering\arraybackslash}m{1.5cm} | c 
    >{\Centering\arraybackslash}m{1.5cm}
    >{\Centering\arraybackslash}m{1.5cm} |} 
    \cline{2-13}  
     & \multicolumn{3}{c|}{\ibmOneName} & \multicolumn{3}{c|}{\apersOneName} & \multicolumn{3}{c|}{\apersTwoName} & \multicolumn{3}{c|}{\thisPaperName} \\\cline{2-13}  
      & $d$ & Sample Count & Query Count & $d$ & Sample Count & Query Count & $d$ & Sample Count & Query Count & $d$ & Sample Count & Query Count \\\hline
    \multicolumn{1}{|c|}{Graph-1} & 6 & $2.07 \times 10^{19}$ & $8.70 \times 10^{20}$ & 6 & $2.45 \times 10^6$ & $1.47 \times10^7$ & 6 & $9.11 \times 10^{16}$  &$5.46 \times 10^{17}$ & 6 & $6.00 \times 10^2$ & $3.60 \times 10^3$ \\ \hline
    \multicolumn{1}{|c|}{Graph-2} & 6 & $2.07 \times 10^{19}$ & $8.70 \times 10^{20}$ & 6 & $2.45 \times 10^6$ &$1.47 \times 10^7$ & 6 & $9.11 \times 10^{16}$ & $5.46 \times 10^{17}$ & 6 & $6.00 \times 10^2$ &$3.60 \times 10^3$\\ \hline
    \multicolumn{1}{|c|}{Graph-3} & 7 & $2.48 \times 10^{22}$ &$1.39 \times 10^{24}$ & 9 & $1.57 \times 10^8$ &$1.41 \times 10^9$ & 7 & $3.70 \times 10^{19}$ &$2.59 \times 10^{20}$ & 9 & $6.00 \times 10^2$ & $5.40 \times 10^3$ \\ \hline
    \multicolumn{1}{|c|}{Graph-4} &  7 & $2.48 \times 10^{22}$ &$1.39 \times 10^{24}$ & 9 & $1.57 \times 10^8$ &$1.41 \times 10^9$ & 7 & $3.70 \times 10^{19}$ & $2.59 \times 10^{20}$ & 9 & $6.00 \times 10^2$ &$5.40 \times 10^3$ \\ \hline
    \end{tabular}}
    \caption{Comparison of the sample counts and query counts for the four different algorithms to estimate the normalised Betti number $\beta_k/|S_k|$ with error $\epsilon=0.1$ and failure probability $\eta = 0.1$. See Table \ref{tab:graph_properties} for the graph properties.}
    \label{tab:step_count_comparison}
\end{table}

\subsection{Classical implementations of the four Monte Carlo algorithms}

We have implemented the two classical algorithms in C++, closely following the descriptions in Algorithms \ref{alg:apers_1} and \ref{alg:apers_2} while incorporating the improvements introduced in this paper.\footnote{Our implementation can be found at \url{https://github.com/quantinuum-dev/CBNE/tree/main}. The repository contains instructions on building and running the tool.} 
For the two quantum algorithms, since we are mainly focusing on comparing the sample count behaviour in the noiseless regime with as large a vertex count as manageably possible, we decided against implementing them on a quantum computer or even using a quantum programming language, preferring to classically simulate the unitary and projection matrices acting on the simplicial subspace of the $\ket{0^a}$-block only, i.e. directly simulating the (otherwise block-encoded) $D$ and $D^\dagger$ on $S_k$ represented with $n$ qubits, using a symbolic algebra package. The most important reason for this is to avoid simulating the full Hilbert space of $n+a$ qubits, thereby achieving an exponential classical simulation saving. The unitary matrices acting solely on the main simplex register are calculated by directly simulating actual gates acting on the main register qubits. However, given our strategy to avoid simulating the full Hilbert space, we have to forgo empirically checking the correctness of the individual quantum gates acting on the auxiliary qubits, satisfying ourselves with mathematically equivalent operations. In particular, the control gates targeting the auxiliary qubits followed by mid-circuit measurement of the auxiliary qubits (with the concomitant state collapse of the main register) are simulated by the following procedure on the main register only. We implement the Markov steps by applying the Hermitian matrices $D$ or $D^\dagger$, calculated by sandwiching a circuit-derived unitary matrix with circuit-equivalent projection matrices. To simulate quantum state collapse and the generation of the measurement outcomes, we draw a uniform random number in $[0,1]$ and compare it to the value of the norm of the non-normalized simplicial state vector in the block. If the random number is less than the value of the norm, we manually normalize the simplex state vector thereby simulating a successful projection onto $\ket{0^a}$, and continue with the remaining Markov steps. If the projection `fails', we record a zero for $s_i$ and move to the next sample. If all $d$ Markov steps end with successful projections, we record a one for $s_i$ and move to the next sample. With this randomised procedure we are able to accurately and realistically simulate lines 10 - 15 of Algorithm \ref{alg:block_encoded_trace}.

\begin{table}[!tb]
    \centering
    \setlength{\tabcolsep}{10pt}
    \renewcommand{\arraystretch}{1.2}
    \begin{tabular}{
    m{1.5cm} >{\Centering\arraybackslash}
    m{0.1cm} >{\Centering\arraybackslash}
    m{2cm}   >{\Centering\arraybackslash}
    m{2cm}   >{\Centering\arraybackslash}
    m{2cm}   >{\Centering\arraybackslash}
    m{2cm}}
     & &\ibmOneName & \apersOneName  & \apersTwoName & \thisPaperName \\
     Graph-1 & 
     $\hat{\beta}_1$&\includegraphics[width=0.15\textwidth, height=20mm]{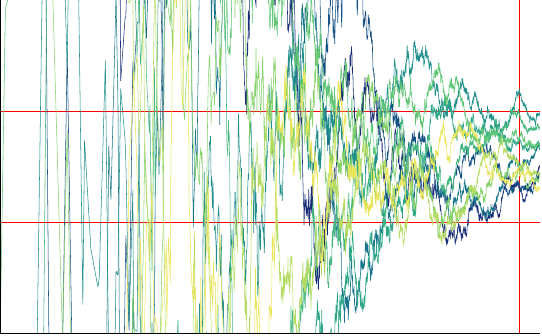}  & \includegraphics[width=0.15\textwidth, height=20mm]{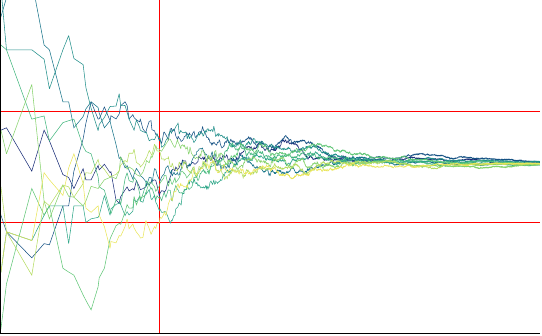} & \includegraphics[width=0.15\textwidth, height=20mm]{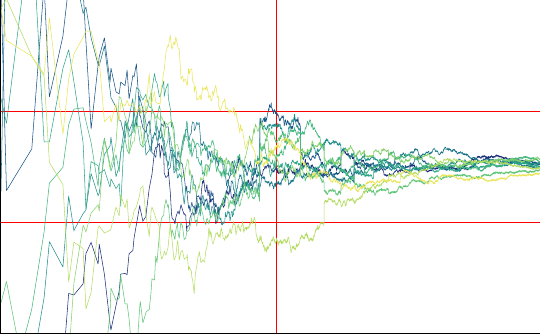}  & 
     \includegraphics[width=0.15\textwidth, height=20mm]{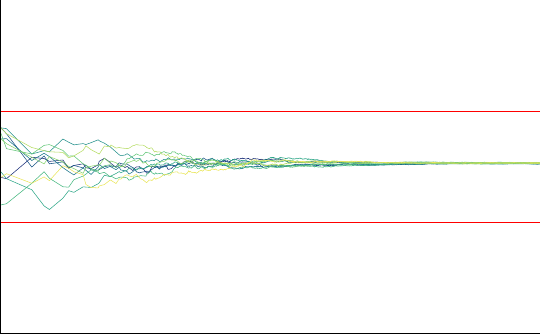} \\ 
    Graph-2 & 
    $\hat{\beta}_2$&\includegraphics[width=0.15\textwidth, height=20mm]{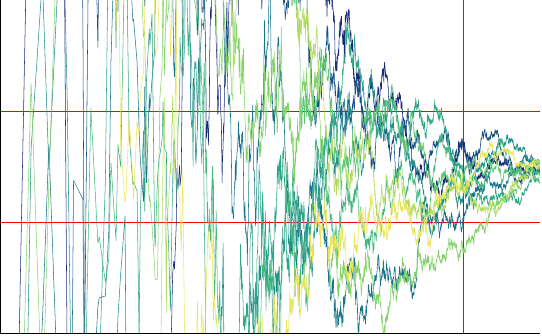} & \includegraphics[width=0.15\textwidth, height=20mm]{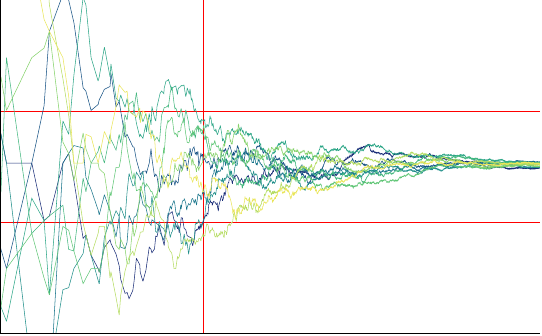} & \includegraphics[width=0.15\textwidth, height=20mm]{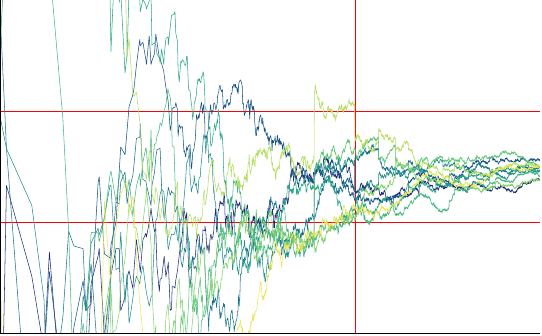}  & 
    \includegraphics[width=0.15\textwidth, height=20mm]{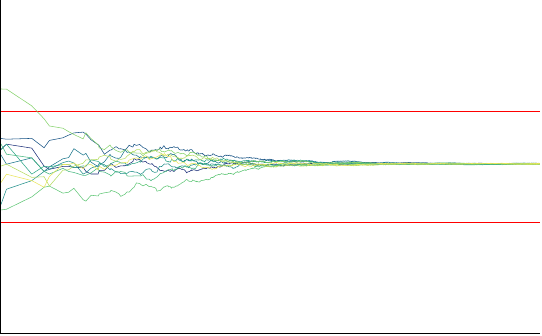} \\ 
     Graph-3 & 
     $\hat{\beta}_1$&\includegraphics[width=0.15\textwidth, height=20mm]{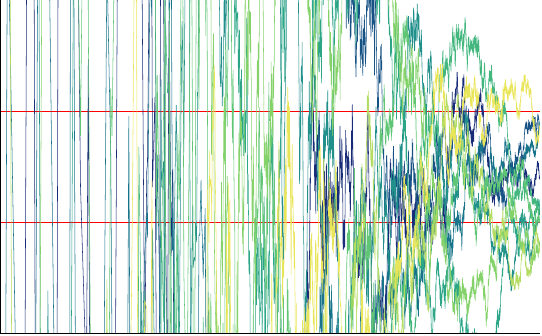}  & \includegraphics[width=0.15\textwidth, height=20mm]{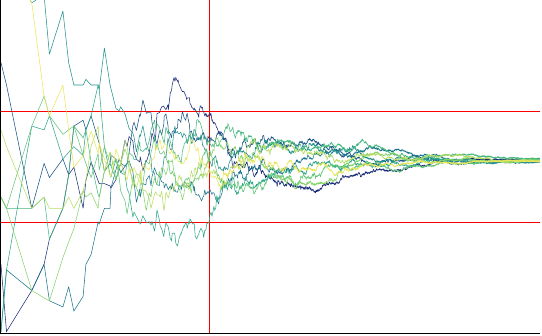} & \includegraphics[width=0.15\textwidth, height=20mm]{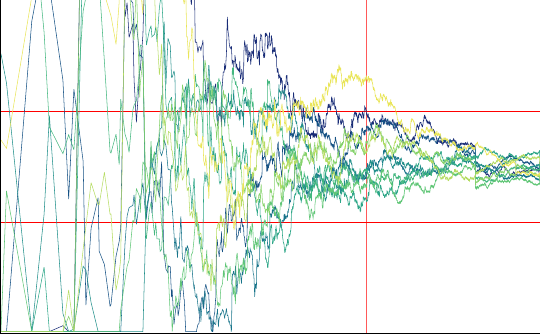} & 
     \includegraphics[width=0.15\textwidth, height=20mm]{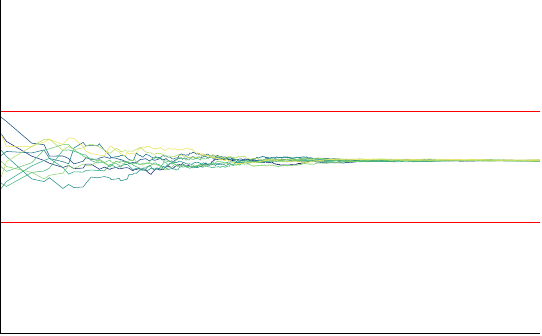} \\ 
     Graph-4 & 
     $\hat{\beta}_2$&\includegraphics[width=0.15\textwidth, height=20mm]{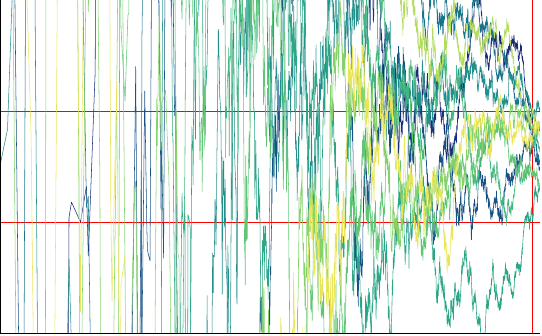}  & \includegraphics[width=0.15\textwidth, height=20mm]{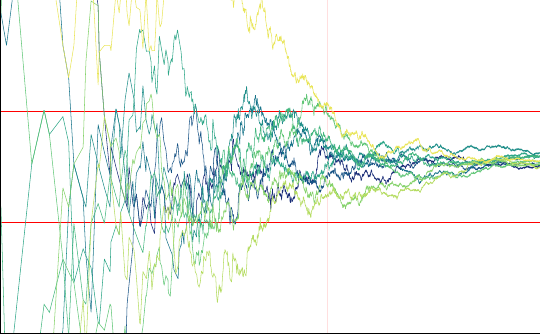}  & \includegraphics[width=0.15\textwidth, height=20mm]{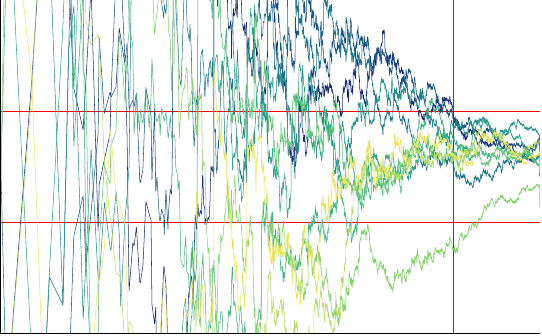}  & \includegraphics[width=0.15\textwidth, height=20mm]{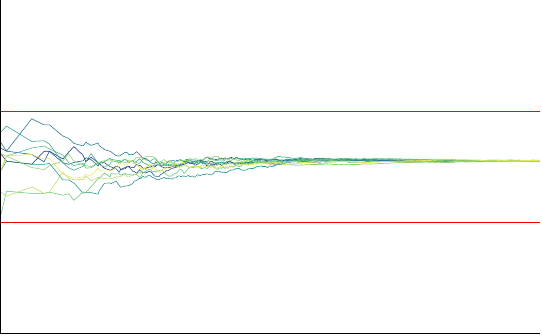}  \\ 
    &&\sampleaxis & \sampleaxis & \sampleaxis & \sampleaxis\\
    \end{tabular}
    \caption{$\hat{\beta}_k$ vs sample count for 10 runs of the four algorithms on each of the four benchmark graphs. Horizontal lines correspond to the desired $\epsilon$ precision interval. Vertical line corresponds to the first count where 9 out of 10 runs empirically remain within $\epsilon$ of the ground-truth (not visible when out of range).}
    \label{tab:run_graphs}
\end{table}

\subsection{Results and interpretation of experiments}
Having explained the benchmark graphs and the implementation of the algorithms, we now discuss the experiments we ran, the results we obtained and our interpretation of the results. As discussed above, we decided to compare the convergence of the four algorithms on the four benchmark graphs as a function of the number of samples taken and not the time taken, since there is an uninteresting time dependence on the different algorithms and their implementations to produce each sample. We also decided to run 10 instances of each of these 16 experiments to allow us to observe that indeed there is significant variation between runs and that our analysis accurately captures this.

With four different algorithms on four benchmark graphs we have widely varying sample counts, as described in Table~\ref{tab:step_count_comparison}. However, in order to facilitate a straightforward comparison between the algorithms as well as between graphs for the same algorithm we have decided to run all experiments for the same fixed sample count of $10^6$ samples. For algorithms other than \thisPaperName{} this sample count is well below the counts required to guarantee convergence as per Table \ref{tab:step_count_comparison}. 

We display the resulting 16 plots in a grid of graphs versus algorithms in Table \ref{tab:run_graphs}. Each plot is the running estimate of the normalised Betti number plotted against the sample count on a log-scale starting at $10^2$, in equal logarithmic steps up until $10^6$ samples. The two horizontal red lines represent the chosen error $\epsilon$ from the true normalised Betti estimate (which is subtracted from the 10 traces to center the plot). The vertical red lines correspond to the first sample count after which 9 out of 10 of the runs remain within $\epsilon$ of the actual normalized Betti number. Fortunately, as expected, convergence occurs much earlier than the sample counts computed in Table~\ref{tab:step_count_comparison}.

Overall the most striking difference between the four algorithms as illustrated in Table~\ref{tab:run_graphs} is the performance of \thisPaperName{} compared to the other three. The estimates output by \thisPaperName{} in Table~\ref{tab:run_graphs} appear to have the smallest variance from the actual Betti number, and produces an estimate within $\epsilon$ using far fewer samples than the other three algorithms. This observation is justified by the sample complexity of \thisPaperName{} only depending on $\epsilon$ and $\eta$, as displayed in Table~\ref{tab:comparison}.

\section{Conclusion}
We have studied four Monte Carlo algorithms for Betti number estimation. Our analysis of the three algorithms already found in the literature improves previous understanding. Furthermore, we introduce a new quantum algorithm that does not suffer from an exponential dependence on the Laplacian inverse-eigengap. The emerging picture is that both quantum approaches benefit from exploring exponentially-many Monte Carlo paths in one circuit run, while both are sample-noise limited by the number of samples needed to extract information about the powers. We have shown that by using the reflected Laplacian, it becomes possible to avoid the exponential precision needed in the power estimation, thereby avoiding the exponential dependence on $1/\delta$ which is present in the previous algorithms. For future work, it would be interesting to determine if other concentration inequalities aside from the Hoeffding inequality could lead to tighter upper bounds.

\newpage
\appendix

\section{Proofs}
Here, we present the proofs for the Lemmas presented in them main section of the paper.
\subsection{Proof of Lemma~\ref{lem:quantum_trace_est_alg_correctness}}\label{app:L3.3}
\begin{lemma}[Restatement of Lemma~\ref{lem:quantum_trace_est_alg_correctness}]
    For each $i=1,...,q$, the random variable $s_i$ output by Algorithm~\ref{alg:block_encoded_trace} is a Bernoulli random variable with expectation $\frac{1}{N}\tr(M^d)$. 
\end{lemma}
\begin{proof}
Clearly each random variable $s_i$ takes value $0$ or $1$ by definition, so we are left to show that the probability that $s_i=1$ is $\frac{1}{N}\tr(M^d)$. From Lemma~\ref{lem:psd_mat_decomp} we know that the the normalized trace of $M^d$ can be expressed as $\frac{1}{N}\tr(M^d) = \frac{1}{N} \sum_{x} \| D^{(d)}\ket{x} \|^2$ where $D^{(d)}$ is a product of $d$ terms equal to $D$ or $D^\dagger$. The random variable $s_i$ is determined by choosing a bitstring $x$ uniformly from the set $S$ and then successively applying the block-encoding of $D$ or its Hermitian conjugate, hence the expected value of $s_i$ is 
\begin{equation*}
    \sum_{x \in S} \frac{1}{N} \Pr(\textnormal{all measurements are \ } 0^a),
\end{equation*}
where the above probability refers to the measurement outcomes described in Algorithm~\ref{alg:block_encoded_trace} line 10 each being $0^a$. We now show that this probability for a given $x \in S$ is $\| D^{(d)}\ket{x} \|^2$ by induction on the degree $d$. 

For the case where $d=1$, Algorithm~\ref{alg:block_encoded_trace} prepares the state 
\begin{equation*}
    U_D \ket{0^a}\ket{x} = \ket{0^a}D\ket{x} + \ket{\psi}
\end{equation*}
where $\ket{\psi}$ is orthogonal to $\ket{0^a}\ket{y}$ for all $y \in S$. After measuring the auxiliary qubits if we observe the outcome $0^a$ the resulting state is
\begin{equation*}
    \frac{1}{\|D\ket{x}\|} \ket{0^a}D\ket{x}
\end{equation*}  
and the probability that we observe this outcome is $\|D\ket{x}\|^2$, which proves the case $d=1$. Next, assume that the statement is true for all degrees $i < d$ and consider the degree $d$ case, meaning that we have constructed the state
\begin{equation*}
    \frac{1}{\|D^{(d-1)}\ket{x}\|} \ket{0^a}D^{(d-1)}\ket{x}
\end{equation*}
with probability $\|D^{(d-1)}\ket{x}\|^{2}$. The algorithm then applies the circuit $U_{D}$ to this state if $d$ is odd and $U_{D^{\dagger}} = (U_D)^{\dagger}$ if $d$ is even, thus creating the state
\begin{equation*}
    \frac{1}{\|D^{(d-1)}\ket{x}\|}\ket{0^a}D^{(d)}\ket{x} + \ket{\psi}
\end{equation*}
for some state $\ket{\psi}$ that is orthogonal to $\ket{0^a}\ket{y}$ for all $y \in S$. Now if we measure the $a$ auxiliary qubits we observe $0^a$ with probability $\|D^{(d)}\ket{x}\|^2/\|D^{(d-1)}\ket{x}\|^2$. Now the whole process succeeds with probability
\begin{equation*}
    \|D^{(d-1)}\ket{x}\|^2\times \frac{\|D^{(d)}\ket{x}\|^2}{\|D^{(d-1)}\ket{x}\|^2} = \|D^{(d)}\ket{x}\|^2
\end{equation*}
as required.
\end{proof}

\subsection{Proof of Lemma~\ref{lem-coef_bound}}\label{app:L4.3}
We now provide the proof of Lemma~\ref{lem-coef_bound}. Our proof makes use of the following known result regarding the Chebyshev polynomials \cite[Theorem 7]{Erdos1947}.

\begin{lemma}
\label{lem-cheby-max}
Suppose that $q(x)$ is a polynomial of degree at most $d$ with the property that $|q(x)| \leq 1$ for all $x \in [-1, 1]$. Then
\[ |T_d(y)| \geq |q(y)| \]
for every $y \in [-1, 1]^c$.
\end{lemma}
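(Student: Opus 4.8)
The statement is the classical extremal property of Chebyshev polynomials, so the plan is to argue by contradiction, combining the equioscillation of $T_d$ on $[-1,1]$ with a root-counting argument. Suppose, toward a contradiction, that some $y_0 \in [-1,1]^c$ satisfies $|q(y_0)| > |T_d(y_0)|$. Since $|T_d(y_0)| \geq 0$, this forces $q(y_0) \neq 0$, so I can define the scalar $\mu := T_d(y_0)/q(y_0)$, which satisfies $|\mu| < 1$. The central object is the auxiliary polynomial $R(x) := T_d(x) - \mu\, q(x)$, which has degree at most $d$ and, by construction, vanishes at $y_0$ because $R(y_0) = T_d(y_0) - \mu\, q(y_0) = 0$.

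First I would extract $d$ roots of $R$ inside $(-1,1)$ from the equioscillation of $T_d$. At the $d+1$ nodes $x_j = \cos(j\pi/d)$ for $j = 0, 1, \dots, d$ one has $T_d(x_j) = (-1)^j$. Because $|q(x_j)| \leq 1$ and $|\mu| < 1$, the product obeys the \emph{strict} bound $|\mu\, q(x_j)| < 1$, so the sign of $R(x_j) = (-1)^j - \mu\, q(x_j)$ is dictated entirely by the leading term: $R(x_j) > 0$ when $j$ is even and $R(x_j) < 0$ when $j$ is odd. Hence $R$ changes sign across each of the $d$ subintervals of $(-1,1)$ cut out by consecutive nodes, and the intermediate value theorem yields at least $d$ distinct roots of $R$ in $(-1,1)$.

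Finally I would tally the roots. The $d$ roots in $(-1,1)$ together with the root $y_0 \notin [-1,1]$ give at least $d+1$ distinct roots of $R$, while $\deg R \leq d$. A nonzero polynomial of degree at most $d$ cannot have $d+1$ roots, so this would force $R \equiv 0$; but $R(x_0) = 1 - \mu\, q(x_0) > 0$ shows $R$ is not identically zero, a contradiction. Therefore no such $y_0$ exists, and $|T_d(y)| \geq |q(y)|$ for every $y \in [-1,1]^c$.

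I expect the only delicate point to be the rigidity of the sign pattern of $R$ at the nodes, which is precisely where the strict inequality $|\mu| < 1$ is essential: it guarantees $|\mu\, q(x_j)| < 1$ strictly, so each value $R(x_j)$ is bounded away from zero with the prescribed sign and the sign changes are genuine rather than tangential. The remaining ingredients — that the roots of $T_d$ all lie in $(-1,1)$ (so $\mu$ is well-defined for $y_0$ outside the interval), the degree bound on $R$, and the final contradiction — are routine, with the degenerate case $d = 0$ holding trivially since $|T_0| \equiv 1 \geq |q|$.
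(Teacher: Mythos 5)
Your proof is correct, but note that the paper does not prove this lemma at all: it is stated as a known result and attributed to Erd\H{o}s (cited as Theorem~7 of that reference), so there is no in-paper argument to compare against. What you have reconstructed is the classical extremal-property proof for Chebyshev polynomials: contradiction via the auxiliary polynomial $R = T_d - \mu q$ with $|\mu|<1$, strict sign alternation of $R$ at the $d+1$ equioscillation nodes $x_j=\cos(j\pi/d)$, the intermediate value theorem producing $d$ distinct roots in disjoint subintervals of $(-1,1)$, plus the extra root $y_0$ outside $[-1,1]$, exceeding the degree bound. All the steps check out, including the two genuinely delicate points you flag: $q(y_0)\neq 0$ (hence $\mu$ well defined) follows from the contradiction hypothesis $|q(y_0)|>|T_d(y_0)|\geq 0$, and the strictness $|\mu|<1$ is exactly what makes the sign pattern at the nodes rigid rather than tangential; the degenerate case $d=0$ is indeed trivial. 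One small slip of phrasing: the well-definedness of $\mu$ has nothing to do with the roots of $T_d$ lying in $(-1,1)$ (that fact is a consequence of the lemma, not an ingredient), but since you justified $q(y_0)\neq 0$ correctly in the first paragraph, this parenthetical is harmless. Supplying this self-contained argument is a reasonable alternative to the paper's bare citation, at the cost of a page of classical approximation theory.
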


\begin{lemma}[Restatement of Lemma~\ref{lem-coef_bound}]
Let $\epsilon > 0$ and $\delta \in (0, 1/2]$ be given, and let $p(x)$ denote the polynomial 
\begin{equation}
    p(x)=T_d\Big(\frac{1-x}{1-\delta}\Big)/T_d\Big(\frac{1}{1-\delta}\Big).
\end{equation}
For $d \geq \frac{1}{\sqrt{\delta}} \log(2/\epsilon)$, the polynomial $p(x)$ has $2$-norm which satisfies
\begin{equation*}
  \frac{1}{d+1}(9/4)^d \leq \|p\|_2^2 \leq \epsilon^2 d 2^{10d}.
\end{equation*}
\end{lemma}

\begin{proof}
Write $p(x)$ as $a_0 + a_1 x + \ldots + a_d x^d$. For the upper bound, we can derive an explicit expression for the coefficients of $p(x)$ using the well-known identity 
\begin{equation*}
    T_d(x) = d \sum_{k=0}^d (-2)^k \frac{(d+k-1)!}{(d-k)!(2k)!}(1-x)^k.
\end{equation*}
Applying the Binomial Theorem and rearranging we then obtain
\begin{equation}
\label{eqn:cheby_identity}
\begin{split}
    T_d\Big(\frac{1-x}{1-\delta}\Big) &= d \sum_{k=0}^d \Big(\frac{-2}{1-\delta}\Big)^k\frac{(d+k-1)!}{(d-k)!(2k)!}\sum_{i=0}^k \binom{k}{i}(-\delta)^{k-i} x^i \\
    &= d \sum_{i=0}^d \Big[ \sum_{k=i}^d \Big(\frac{-2}{1-\delta}\Big)^k\frac{(d+k-1)!}{(d-k)!(2k)!} \binom{k}{i}(-\delta)^{k-i} \Big]x^i. \\
\end{split}
\end{equation}
From Lemma~\ref{lem-chebypoly}, our choice of $d \geq \frac{1}{\sqrt{\delta}} \log(2/\epsilon)$ implies $1/T_d\big(\frac{1}{1-\delta}\big) \leq \epsilon$, hence the $2$-norm of $p(x)$ satisfies
\begin{equation*}
\begin{split}
    \|p\|_2^2 &\leq \epsilon^2 d^2 \sum_{i=0}^d \Big[ \sum_{k=i}^d \Big(\frac{-2}{1-\delta}\Big)^k\frac{(d+k-1)!}{(d-k)!(2k)!} \binom{k}{i}(-\delta)^{k-i} \Big]^2 \\
    &\leq \epsilon^2 d^3 \sum_{i=0}^d \sum_{k=i}^d \Big(\frac{2}{1-\delta}\Big)^{2k} \Big(\frac{(d+k-1)!}{(d-k)!(2k)!}\Big)^2 \binom{k}{i}^2 \delta^{2(k-i)}  \\
\end{split}
\end{equation*}
where the second inequality follows from the Cauchy-Schwartz inequality applied to the inner sum. We may express $\frac{(d+k-1)!}{(d-k)!(2k)!}$ as the scaled binomial coefficient $\frac{1}{d+k} \binom{d+k}{2k}$. Applying this identity and the fact that $\delta \leq 1$ we obtain
\begin{equation*}
\begin{split}
    \|p\|_2^2 &\leq \epsilon^2 d^3 \Big(\frac{2}{1-\delta}\Big)^{2d} \sum_{i=0}^d \sum_{k=i}^d \frac{1}{(d+k)^2} \binom{d+k}{2k}^2 \binom{k}{i}^2  \\
\end{split}
\end{equation*}
The binomial coefficients $\binom{d+k}{2k}$ are trivially upper bounded by $2^{2d}$ for each $k \leq d$, and additionally $\frac{1}{(d+k)^2} \leq \frac{1}{d^2}$ for $0 \leq k \leq d$, therefore applying this above we obtain
\begin{equation*}
\begin{split}
    \|p\|_2^2 &\leq \epsilon^2 d \Big(\frac{2}{1-\delta}\Big)^{2d} 2^{4d} \sum_{i=0}^d \sum_{k=i}^d \binom{k}{i}^2 \\
    &= \epsilon^2 d \Big(\frac{2}{1-\delta}\Big)^{2d} 2^{4d} \sum_{k=0}^d \sum_{i=0}^k \binom{k}{i}^2 \\
    &= \epsilon^2 d \Big(\frac{2}{1-\delta}\Big)^{2d} 2^{4d} \sum_{k=0}^d \binom{2k}{k} \\
    &\leq \epsilon^2 d \Big(\frac{2}{1-\delta}\Big)^{2d} 2^{4d} \sum_{k=0}^d 4^k \\
    &\leq \epsilon^2 d \Big(\frac{2}{1-\delta}\Big)^{2d} 2^{4d} \frac{4^{d+1}-1}{3} \\
    &\leq \frac{\epsilon^2 d}{(1-\delta)^{2d}} 2^{8d}  \\
\end{split}
\end{equation*}
where the second equality and inequality follows from the well-known identities $\sum_{i=0}^k \binom{k}{i}^2 = \binom{2k}{k}$ and $\binom{2k}{k} \leq 4^k$. From the assumption that $\delta \leq 1/2$ this then implies the upper bound
\begin{equation*}
     \|p\|_2^2 \leq  \epsilon^2 d 2^{10d}.
\end{equation*}
This proves the claimed upper bound. For the lower bound, an application of the Cauchy-Schwartz inequality implies
\begin{equation*}
    |p(x)|^2 = |a_0 + a_1 x + ... + a_d x^d|^2 \leq (a_0^2 + a_1^2 + ... + a_d^2) (1 + x^2 + x^4 + ... + x^{2d}).
\end{equation*}
When $x=-1$ this reads
\begin{equation}
\label{eqn:2norm_bound}
    \|p\|_2^2 \geq \frac{1}{d+1}\Big|T_d\Big(\frac{2}{1-\delta}\Big)/T_d\Big(\frac{1}{1-\delta}\Big)\Big|^2.
\end{equation}
To prove the claimed lower bound, it suffices to show that $T_d(2x)/T_d(x) \geq (3/2)^d$ for all $x \geq 1$. We prove this inequality by induction on $d$. Since $T_0(x)=1$ and $T_1(x)=x$, it is easily seen that $T_0(2x)/T_0(x)=1$ and $T_1(2x)/T_1(x)= 2$, so that the cases $d=0$ and $d=1$ both hold. Now suppose the statement is true for all degrees at most $d$. Applying the recurrence of (\ref{eqn:chebyrecurrence}) we obtain
\begin{equation*}
\begin{split}
    \frac{T_{d+1}(2x)}{T_{d+1}(x)} &= \frac{4xT_d(2x)-T_{d-1}(2x)}{2xT_d(x)-T_{d-1}(x)} \\
    &\geq \frac{4xT_d(2x)-T_{d-1}(2x)}{2xT_d(x)}. \\
\end{split}
\end{equation*}
Since the Chebyshev polynomial takes values between between $-1$ and $1$ on the interval $[-1,1]$ then applying Lemma~\ref{lem-cheby-max} we obtain $T_d(2x) \geq T_{d-1}(2x)$ for all $x \in [1, \infty)$, thus 
\begin{equation*}
    \frac{T_{d+1}(2x)}{T_{d+1}(x)} \geq \frac{(4x-1)T_d(2x)}{2xT_d(x)}.
\end{equation*}
The function $\frac{4x-1}{2x}$ takes minimum value $3/2$ on the interval $[1, \infty)$, so applying this along with our inductive hypothesis gives $T_{d+1}(2x)/T_{d+1}(x) \geq (3/2)^{d+1}$, which then implies the claimed lower bound.
\end{proof}

\subsection{Proof of Lemma~\ref{lem-coef_bound2}}\label{app:L5.5}

\begin{lemma}[Restatement of Lemma~\ref{lem-coef_bound2}]
Let $\epsilon > 0$ and $\delta \in (0, 1/2]$ be given, and let $p_d(x)$ denote the polynomial 
\begin{equation*}
    p_d(x) = T_{d}\Big(\frac{x}{1-\delta}\Big)/T_d\Big(\frac{1}{1-\delta}\Big)
\end{equation*}
For $d \geq \frac{1}{\sqrt{\delta}} \log(4/\epsilon)$, the polynomial $p_d(x)$ has $2$-norm which satisfies
\begin{equation*}
  \|p_d\|_2 \leq \epsilon2^{3d-1}.
\end{equation*}
\end{lemma}

\begin{proof}
We proceed by induction on the degree $d$. Note that in general since the term $1/T_d(\frac{1}{1-\delta})$ is a constant then the norm of $p_d$ can be expressed as
\begin{equation*}
    \|p_d\|_2 = \frac{1}{|T_d(\frac{1}{1-\delta})|} \Big\| T_d \Big(\frac{x}{1-\delta} \Big) \Big\|.
\end{equation*}
The choice of $d \geq \frac{1}{\sqrt{\delta}} \log(4/\epsilon)$ implies that $\frac{1}{T_d(\frac{1}{1-\delta})} \leq \frac{\epsilon}{2}$, therefore in general will show 
\begin{equation*}
    \Big\| T_d \Big(\frac{x}{1-\delta} \Big) \Big\| \leq 2^{3d}.
\end{equation*}
For the cases $d=0,1$, the Chebyshev polynomials $T_d(x)$ are $1, x$, respectively. Therefore $T_d(\frac{x}{1-\delta})$ in these cases are $1, \frac{x}{1-\delta}$, respectively. The norms of these polynomials are then $1$ and $\frac{1}{1-\delta} \leq 2$,  since $\delta$ is assumed to be in the interval $(0 , 1/2]$. Therefore $\|T_d(\frac{x}{1-\delta})\| \leq 2^{3d}$ for $d=0$ and $1$. 

For the general case when $d\geq 2$, we assume that $\|T_i(\frac{x}{1-\delta})\| \leq 2^{3i}$ for all $i < d$. The Chebyshev polynomials satisfy the recurrence $T_d(x) = 2xT_{d-1}(x) - T_{d-2}(x)$ for $d\geq 2$. Since the polynomial norm is subadditive and since $\delta \in (0, 1/2]$ then 
\begin{equation*}
\begin{split}
    \Big\|T_d\Big(\frac{x}{1-\delta}\Big)\Big\| &\leq \Big\|\frac{2x}{1-\delta}T_{d-1}\Big(\frac{x}{1-\delta}\Big) \Big\| + \Big\|T_{d-2}\Big(\frac{x}{1-\delta}\Big) \Big\| \\
    &\leq 4\times 2^{3(d-1)} + 2^{3(d-2)} \\
    &\leq  2^{3d} \\
\end{split}
\end{equation*}
as claimed.
\end{proof}

\end{document}